\documentclass[journal]{IEEEtran}
\usepackage[T1]{fontenc}
\usepackage[english]{babel}
\usepackage[utf8]{inputenc}
\usepackage{amsthm,amsfonts,amsmath,amssymb}
\usepackage{algorithm}
\usepackage{pifont}
\usepackage{graphicx}
\usepackage{psfrag}

\usepackage{enumitem}
\usepackage{booktabs}
\usepackage{xcolor}
\usepackage{multirow}
\usepackage{array}
\usepackage{steinmetz}
\usepackage{bm}
\usepackage{cite}
\usepackage{subfigure}
\usepackage{makecell}
\newtheorem{prop}{Proposition}
\newtheorem{corollary}{Corollary}
\newtheorem{theorem}{Theorem}

\newtheorem{remark}{Remark}
\newtheorem{assumption}{Assumption}
\usepackage{algorithm}
\usepackage{algpseudocode}
\usepackage[normalem]{ulem}
\usepackage{threeparttable}
\usepackage{siunitx}
\usepackage{empheq}
\usepackage{mathtools}
\usepackage{url}
\usepackage[margin=0.52in]{geometry}
\usepackage{todonotes}
\usepackage{comment}

\ifCLASSINFOpdf

\else

\fi

\allowdisplaybreaks

\begin{document}

\title{Robust Operation of Distribution Networks: Generalized Uncertainty Modelling in Confidence-Level-Based Information Gap Decision}

\author{Zhisheng Xiong,~\IEEEmembership{Graduate Student Member, IEEE, }
        Dimitris Boskos,~\IEEEmembership{Member, IEEE, }
        Bo Zeng,~\IEEEmembership{Member, IEEE, }
        Peter Palensky,~\IEEEmembership{Senior Member, IEEE, }
        Pedro~P.~Vergara,~\IEEEmembership{Senior Member, IEEE} 
\thanks{Zhisheng Xiong, Peter Palensky and Pedro P. Vergara are with the Intelligent Electrical Power Grids (IEPG) group, Delft University of Technology, 2628 CD Delft, The Netherlands (e-mail: \{z.xiong, p.palensky, p.p.vergarabarrios\}@tudelft.nl).

Dimitris Boskos is with the Delft Center for Systems and Control, Delft University of Technology, 2628 CD Delft, The Netherlands (e-mail: d.boskos@tudelft.nl).

Bo Zeng is with the Department of Industrial Engineering and the Department of Electrical and Computer Engineering, University of Pittsburgh, Pittsburgh, PA 15106 USA (e-mail: bzeng@pitt.edu).}}

\maketitle
\begin{abstract}
This paper studies the robust optimal operation of distribution networks (DNs) under renewable generation and load demand uncertainties, seeking an improved trade-off between robustness and economic performance. Building upon information gap decision theory (IGDT), a generalized uncertainty modelling is proposed to enhance the expressiveness of the uncertainty characterization. The proposed modelling captures both symmetric and asymmetric uncertainty features, and supports linear or nonlinear expansion of the uncertainty sets driven by confidence level. This advancement leads to the development of a confidence-level-based IGDT (CL-IGDT) framework for DN operation. To solve the resulting model, its equivalence to a family of two-stage robust optimization problems (TSROs) is established, enabling a Fibonacci search over the confidence level. To further improve computational efficiency, a cut-recycling strategy is proposed to exploit invariant information across TSROs. These techniques are integrated into a novel Fibonacci-Parametric Column-and-Constraint Generation algorithm with guaranteed asymptotic convergence. Case studies validate the effectiveness of the proposed framework and demonstrate the performance advantages of the proposed algorithm.
\end{abstract}

\begin{IEEEkeywords}
Information gap decision theory, uncertainty modelling, decision dependent uncertainty, distribution systems, column-and-constraint generation. 
\end{IEEEkeywords}

\vspace{-4mm}
\section*{Nomenclature} \vspace{-1mm}
\noindent \emph{A. Sets}:
\begin{description}[itemsep=0.5ex, labelwidth=1.6cm, leftmargin=1.8cm]
	\item [{\small${\cal N}$, ${\cal L}$}]      Set of nodes and lines
  	\item[{\small${\cal G}$, ${\cal B}$}] Set of nodes connecting DGs/ESSs, ${\cal G}, {\cal B} \subset {\cal N}$
	\item [{\small${\cal T}$}]      Set of time-periods
\end{description}
\noindent \emph{B. Indexes}:
\begin{description}[itemsep=0.5ex, labelwidth=1.6cm, leftmargin=1.0cm]
	\item [{\small $i$, $j$, $ij$}]    Node and line indices, $i,j \in {\cal N}$, $ij \in {\cal L}$
	\item [{\small$t$}]           	Time interval $t \in {\cal T}$
\end{description}
\noindent \emph{C. Parameters}:
\begin{description}[itemsep=0.5ex, labelwidth=1.6cm, leftmargin=1.8cm]
	\item [{\small$c_{i}^{G}$, $c_{i}^{U}$}]    			        Operational/start-up cost of DGs
  	\item [{\small$c_{t}^{I}$, $c_{t}^{E}$}]    		            Grid electricity purchase/selling cost
    \item [{\small$\hat{c}^{cur}$, $\hat{c}^{ls}$}]    		        Penalty for renewable curtailment/load shedding
	\item [{\small$\overline{e}_{i,t}$, $\underline{e}_{i,t}$}]     ESS energy limits (max/min)
	\item [{\small$\overline{p}_{i}^{B+}$, $\overline{p}_{i}^{B-}$}]  ESS charging/discharging limits
  	\item [{\small$\eta_{i}^B$}]    				                    ESS efficiency
	\item [{\small$\overline{p}_{i}^{G}$, $\underline{p}_{i}^{G}$}]    				DG active power limits (max/min)
 	\item [{\small$\overline{q}_{i}^{G}$, $\underline{q}_{i}^{G}$}]    				DG reactive power limits (max/min)
    \item[{\small$\kappa^I$, $\kappa^E$}]                   Deviation limits for grid import/export
    \item [{\small$\tilde{p}_{i,t}^{PV}$, $\tilde{p}_{i,t}^{L}$}]    	Expected PV generation/load demand
	\item [{\small$\overline{p}^{l}$}]    				    Line power flow limit
 	\item [{\small$\overline{v}$, $\underline{v}$}]    	    Limit of {\small$\hat{v}_{i,t}$} (max/min)
  	\item [{\small$r_{ij}$}, $x_{ij}$]    	                Line resistance/reactance
 	\item [{\small$RU_{i}^{G}$, $RD_{i}^{G}$}]              DG ramp up/down limits
	\item [{\small$\Delta t$}]    				            Time step length
    \item [{\small$\delta_{i,t}$}]    		                Power factor angle
    \item[{\small$|\cdot|$}]                                Cardinality of a set

\end{description}
\noindent \emph{D. Continuous Variables}:
\begin{description}[itemsep=0.5ex, labelwidth=1.6cm, leftmargin=1.8cm]
        \item [{\small$p_{i,t}^{G}$, $\hat{q}_{i,t}^{G}$}]     DG active/reactive power output
		\item [{\small$e_{i,t}$}]    					       Energy of the ESSs
        \item [{\small$p_{i,t}^{B+}$, $p_{i,t}^{B-}$}]    	   ESS charging/discharging power
        \item [{\small$p_{t}^{I}$, $p_{t}^{E}$}]               Power imported/exported from grid
        \item [{\small$\hat{p}_{i,t}^{cur}$, $\hat{p}_{i,t}^{ls}$}]           Renewable curtailment and load shedding power
        \item [{\small$p_{ij,t}$, $\hat{q}_{ij,t}$}]        Line active/reactive power flow
        \item [{\small$\hat{v}_{i,t}$}]    		Squared bus voltage magnitude
        \item [{\small$\hat{\cdot}$}]    	    Second-stage counterpart
\end{description}
\noindent \emph{E. Integer Variables}:
\begin{description}[itemsep=0.5ex, labelwidth=1.6cm, leftmargin=1.8cm]
    \item[{\small$z_{i,t}^{G}$}]                 DG ON/OFF status (binary)
    \item[{\small$z_{i,t}^{B}$}]                 ESS charging/discharging status (binary)
    \item[{\small$z_t^l$}]                       Grid import/export status (binary)
    \item[{\small$s_{i,t}^{G}$}]                 DG startup count (integer)
\end{description}

\IEEEpeerreviewmaketitle

\section{Introduction} \label{sec:intro}
\IEEEPARstart{T}{he} global transition toward carbon neutrality has led many countries to commit to net-zero emission targets within the coming decades~\cite{carbon_neutrality}. The power sector is undergoing a fundamental transformation, with increasing emphasis on integrating renewable energy sources (RESs) into distribution networks (DNs). This shift is critical for climate goals but also introduces considerable operational risk due to the intermittent and stochastic nature of RESs. In particular, uncertainties in RESs and load demand can result in higher costs, increased power loss, and unstable operation~\cite{impact_uncertainty}. To mitigate such risks, uncertainty-aware optimization models have become essential in modern DN operations, with their effectiveness largely determined by how uncertainties are represented and integrated.

Recent years have seen extensive research on uncertainty-aware operation of DNs, focusing on modeling and handling uncertainties in renewable generation and load demand. A two-stage stochastic programming (SP) model is adopted in~\cite{sp1}, where the first stage minimizes investment cost and the second minimizes expected recourse cost. Ref.~\cite{sp2} applies chance constraints with Gaussian-based scenarios for Volt/Var control. In~\cite{ro1}, a two-stage robust optimization (RO) framework manages active power injections under wind and load uncertainty. Ref.~\cite{ro2} introduces a budgeted uncertainty set in a three-stage model to reduce energy loss and voltage magnitude deviation. In reference to distributionally robust optimization (DRO), Ref.~\cite{dro1} uses a Wasserstein-based ambiguity set for unit commitment, while Ref.~\cite{dro2} models power delivery deviations as distributionally robust chance constraints. Despite differing philosophies, these approaches share a common assumption: uncertainties follow a predefined paradigm, i.e., a probability distribution in SP, an uncertainty set in RO, or an ambiguity set in DRO. Such modeling is termed \textit{decision-independent uncertainty (DIU)}~\cite{DDUs}, or exogenous uncertainty, as it remains unaffected by decisions.

An alternative line of research focuses on information gap decision theory (IGDT), which offers a fundamentally different philosophy for optimization under uncertainty. Instead of minimizing worst-case cost for a given uncertainty characterization, IGDT maximizes the tolerable uncertainty horizon under a predefined performance requirement~\cite{igdt_Ben_Haim}. For example, within a fixed economic budget, Ref.~\cite{w_igdt} develops a weighted IGDT framework for optimal power flow, where the uncertainty horizon expands with wind power penetration rate, capturing time-varying characteristics. Building on the envelope-bound uncertainty set of IGDT, Ref.~\cite{idgt_binary1} introduces binary variables to refine structural definitions of uncertainty sets, enabling a multi-energy dispatch model that maximizes risk-tolerant boundaries. Another structural variant is explored in~\cite{idgt_binary2}, where polyhedral uncertainty sets enlarge the tolerable region for wind power and load demand. Collectively, these approaches have expanded the modeling scope of IGDT. However, they still rely on symmetric uncertainty sets that scale only linearly with financial budgets, which limits adaptability and prevents them from capturing asymmetric features of uncertainty. In DNs, asymmetric uncertainties typically manifest as scenarios of low renewable generation combined with high load demand, which significantly deviate from expected values. A symmetric set may fail to capture these cases and pose operational risks. In our previous work~\cite{zhishengxiong}, a confidence-level-driven uncertainty set within a CL-IGDT framework is proposed, explicitly addressing such asymmetry. However, this set fails to represent structured dependencies among uncertainty dimensions and its box-type expansion still limits diverse uncertainty patterns.

The uncertainty modeling in IGDT is closely tied to performance budget: the uncertainty set evolves with the associated decision that reflects the predefined performance requirement. This feature corresponds to \textit{decision-dependent uncertainty (DDU)}, or endogenous uncertainty. Although not explicitly recognized in IGDT studies, this perspective opens the possibility of accommodating more expressive uncertainty representations. Guided by this insight, we introduce a generalized uncertainty set under the CL-IGDT framework. Conventional IGDT-based uncertainty modeling typically restricts symmetric sets to linear expansions of box-type shapes. In contrast, the proposed set captures both symmetric and asymmetric features of uncertainty, and supports linear or nonlinear expansions with respect to the confidence level. Moreover, compared with prior studies that rely on binary variables, we introduce a continuous structure parameter to represent dependencies among uncertainty dimensions (e.g., temporal correlations). These extensions enhance the modeling capability of IGDT-based approaches. In this way, our formulation encompasses many existing uncertainty models in IGDT-based energy system optimization, such as those in~\cite{w_igdt, idgt_binary1, idgt_binary2}, while overcoming their limited adaptability. From an operational perspective, it enables a more flexible characterization of uncertainties and supports an improved trade-off between robustness and economic performance.

Nonetheless, one must acknowledge the trade-off between modeling sophistication and computational efficiency. As the enhanced uncertainty modeling is embedded into the CL-IGDT framework, solving the resulting optimization problem becomes increasingly challenging. To address this, we establish a mathematical equivalence between CL-IGDT and a family of TSRO problems (TSROs). Leveraging this equivalence, \textit{a cut-recycling strategy} is proposed to exploit invariant information across iterations. Accordingly, a Fibonacci-Parametric Column-and-Constraint Generation (F-PC\&CG) algorithm is developed. Overall, the main contributions are as follows:
\begin{itemize}
\item A robust operation model for DNs is formulated under the proposed CL-IGDT framework on top of our previous work~\cite{zhishengxiong}, achieving an improved trade-off between robustness and economic performance.
\item A generalized uncertainty set is proposed within the CL-IGDT framework, enabling a more expressive characterization of RES and load uncertainties, and enhancing the uncertainty modeling capability of IGDT-based methods.
\item A novel F-PC\&CG algorithm is developed to solve the model. It leverages the equivalence between CL-IGDT and TSROs and incorporates a cut-recycling strategy for computational efficiency, guaranteeing asymptotic convergence with bounded approximation error.
\end{itemize}

\section{Problem Formulation}
For DN operation, the goal is to develop a cost-effective control strategy that ensures reliable power supply and system security. This requires coordinating the operation of distributed energy resources (DERs), including distributed generators (DGs) and energy storage systems (ESSs), to meet demand while satisfying power balance, voltage magnitude limits, and line flow constraints. Accordingly, a two-stage operational framework is proposed. The first stage involves approximate active power dispatch based on expected uncertainty realizations, while the second stage determines control actions under actual conditions, including both adjustments to baseline active power output and new decisions such as reactive power regulation. By coordinating active and reactive power control across timescales, the proposed strategy enhances operational flexibility and reduces conservatism compared to single-stage approaches~\cite{yangyue}. The model is formulated from a forward-looking perspective, assuming a centralized decision-maker responsible for coordinating DER operations and network security (e.g., a future DSO with extended responsibilities). The proposed framework is applicable to operators facing similar uncertainty-aware operational challenges.
\vspace{-10pt}
\subsection{Two-Stage Optimal Operation of DNs}

\subsubsection{First Stage: Baseline Dispatch Planning}
The first-stage approximate active power schedule serves as a baseline that anticipates uncertainties in the second stage, ensuring that subsequent recourse actions remain feasible and cost-effective under various realizations. The corresponding formulation is given in \eqref{eq:obj1}--\eqref{eq:power_limit}. The objective function in~\eqref{eq:obj1} minimizes the total operational cost over the planning horizon~$\mathcal{T}$, including generation cost of DGs, start-up costs, and power exchange with the upstream grid.
\begin{equation}\label{eq:obj1}
\begin{aligned}
\min \quad & \sum_{t \in {\cal T}} \Bigg( \sum_{i \in {\cal G}} \left( c_{i}^{G} p_{i,t}^G + c_{i}^{U} s_{i,t}^G \right) + c_{t}^{I} p_t^{I} - c_{t}^{E} p_t^{E} \Bigg)
\end{aligned}
\end{equation}
subject to:
\setlength{\arraycolsep}{-0.6em}
\begin{eqnarray}
&&z_{i,t}^{G}\underline{p}_{i}^{G} \leq p_{i,t}^{G} \leq z_{i,t}^{G}\overline{p}_{i}^{G},\forall i\in\mathcal{G},\forall t\in\mathcal{T} \label{eq:DGs_limit} \\
&&p_{i,t}^{G} - p_{i,t-1}^{G} \leq RU_{i}^{G},\forall i\in\mathcal{G},\forall t\in\mathcal{T} \label{eq:DGs_rampup} \\
&&p_{i,t-1}^{G} - p_{i,t}^{G} \leq RD_{i}^{G}, \forall i\in\mathcal{G},\forall t\in\mathcal{T} \label{eq:DGs_rampdown} \\
&&s_{i,t}^{G} \geq z_{i,t}^{G}-z_{i,t-1}^{G}, \forall i\in\mathcal{G},\forall t\in\mathcal{T} \label{eq:DGs_startup1} \\
&&s_{i,t}^{G} \geq 0, \forall i\in\mathcal{G},\forall t\in\mathcal{T} \label{eq:DGs_startup2} \\
&&e_{i,t} = e_{i,t-1} + \left(p_{i,t}^{B+}\eta_{B} - p_{i,t}^{B-}/\eta_{i}^{B}\right)\Delta t, \forall i\in\mathcal{B},\forall t\in\mathcal{T} \label{eq:ESSs_energy} \\
&&\underline{e}_{i} \leq e_{i,t} \leq \overline{e}_{i},\forall i\in\mathcal{B},\forall t\in\mathcal{T} \label{eq:ESSs_limit} \\
&&0\leq p_{i,t}^{B+} \leq z_{i,t}^{B}\overline{p}_{i}^{B+},\forall i\in\mathcal{B},\forall t\in\mathcal{T} \label{eq:ESSs_charge} \\
&&0\leq p_{i,t}^{B-} \leq (1-z_{i,t}^{B})\overline{p}_{i}^{B-},\forall i\in\mathcal{B},\forall t\in\mathcal{T} \label{eq:ESSs_discharge} \\
&&0\leq p_{t}^{I} \leq z_{t}^{l}\overline{p}^{l},\forall t\in\mathcal{T} \label{eq:import_limit} \\
&&0\leq p_{t}^{E} \leq (1-z_{t}^{l})\overline{p}^{l},\forall t\in\mathcal{T} \label{eq:export_limit} \\
&&p_{01,t} = p_t^{I} - p_t^{E},\forall t\in\mathcal{T} \label{eq:interface_flow} \\
&&\sum\limits_{j:(ij)\in\mathcal{L}}p_{ij,t}-\sum\limits_{h:(hi)\in\mathcal{L}}p_{hi,t}=p_{i,t},\forall i\in\mathcal{N}\setminus\{0\},\forall t\in\mathcal{T} \label{eq:power_balance} \\
&&\nonumber p_{i,t}=p_{i,t}^{G}+p_{i,t}^{B-}-p_{i,t}^{B+} \\
&&\qquad\qquad\qquad\quad +\tilde{p}_{i,t}^{PV}-\tilde{p}_{i,t}^{L},\forall i\in\mathcal{N}\setminus\{0\},\forall t\in\mathcal{T} \label{eq:netload} \\
&&-\overline{p}^{l}\leq p_{ij,t} \leq \overline{p}^{l},\forall(i,j) \in\mathcal{L},\forall t\in\mathcal{T} \label{eq:power_limit}
\end{eqnarray}

In the above formulation, the DGs constraints are composed of the active power output limits \eqref{eq:DGs_limit}, ramp up/down rate limits \eqref{eq:DGs_rampup}--\eqref{eq:DGs_rampdown}, and start-up counting \eqref{eq:DGs_startup1}--\eqref{eq:DGs_startup2}. The ESSs are considered to operate in power control mode, allowing pre-scheduling of their active power output \cite{active_power_ESS}. The operation of ESSs is governed by constraints on stored energy dynamics \eqref{eq:ESSs_energy}, storage limits \eqref{eq:ESSs_limit}, and charging/discharging limits \eqref{eq:ESSs_charge}--\eqref{eq:ESSs_discharge}. The electricity exchange constraints are given in \eqref{eq:import_limit}--\eqref{eq:interface_flow}. Constraint \eqref{eq:power_balance} models the active power balance, and constraint \eqref{eq:netload} considers net active loads. The maximum power flow limits are enforced by~\eqref{eq:power_limit}. 

\subsubsection{Second Stage: Recourse Control Strategy}
This stage enables the system to respond to deviations and maintain operational security under uncertainty. Once the uncertain scenarios are realized, corrective control actions are determined in the second stage to ensure secure and cost-effective operation. These include active power adjustments to the baseline plan and new decisions such as reactive power outputs from DGs. The total recourse control cost in \eqref{eq:obj2} is determined by the cost of recourse active power output of DGs, additional electricity exchange with the upstream grid, and penalty costs of renewable curtailment and load shedding.
\begin{equation} \label{eq:obj2}
\begin{aligned}
\min \quad & \sum_{t \in {\cal T}} \Bigg( 
\sum_{i \in {\cal G}} c_{i}^{G} \Delta p_{i,t}^G 
+ \hat{c}_{t}^{I} \Delta p_t^{I} 
- \hat{c}_{t}^{E} \Delta p_t^{E} \\
& \qquad + \sum_{i \in \mathcal{N}} \left( \hat{c}^{\rm cur} \hat{p}_{i,t}^{\rm cur} + \hat{c}^{\rm ls} \hat{p}_{i,t}^{\rm ls} \right) 
\Bigg)
\end{aligned}
\end{equation}
subject to:
\setlength{\arraycolsep}{-0.6em}
\begin{eqnarray}
&&\Delta p_{i,t}^{G} = \hat{p}_{i,t}^{G} - p_{i,t}^{G},\forall i\in\mathcal{G},\forall t\in\mathcal{T} \label{eq:DGs_adjust} \\
&&\Delta p_{t}^{I} = \hat{p}_{t}^{I} - p_{t}^{I},\forall t \in \mathcal{T} \label{eq:import_adjust} \\
&&\Delta p_{t}^{E} = \hat{p}_{t}^{E} - p_{t}^{E},\forall t \in \mathcal{T} \label{eq:export_adjust} \\
&&z_{i,t}^{G}\underline{p}_{i}^{G} \leq \hat{p}_{i,t}^{G} \leq z_{i,t}^{G}\overline{p}_{i}^{G},\forall i\in\mathcal{G},\forall t\in\mathcal{T} \label{eq:DGs_recourse} \\
&&\hat{p}_{i,t}^{G} - \hat{p}_{i,t-1}^{G} \leq RU_{i}^{G},\forall i\in\mathcal{G},\forall t\in\mathcal{T} \label{eq:DGs_regulate_rampup} \\
&&\hat{p}_{i,t-1}^{G} - \hat{p}_{i,t}^{G} \leq RD_{i}^{G}, \forall i\in\mathcal{G},\forall t\in\mathcal{T} \label{eq:DGs_regulate_rampdn} \\
&&z_{i,t}^{G} \underline{q}_{i}^{G} \leq \hat{q}_{i,t}^{G} \leq z_{i,t}^{G} \overline{q}_{i}^{G},\forall i\in\mathcal{G},\forall t\in\mathcal{T} \label{eq:DGs_reactive} \\
&&p_{t}^{I} \leq \hat{p}_{t}^{I} \leq (1+\kappa^I)p_{t}^{I},\forall t\in\mathcal{T} \label{eq:import_limit_2s} \\
&&p_{t}^{E} \leq \hat{p}_{t}^{E} \leq (1+\kappa^E)p_{t}^{E},\forall t\in\mathcal{T} \label{eq:export_limit_2s} \\
&&\hat{p}_{01,t} = \hat{p}_t^{I} - \hat{p}_t^{E},\forall t\in\mathcal{T} \label{eq:interface_flow_2s} \\
&&\sum\limits_{j:(ij)\in\mathcal{L}}\hat{p}_{ij,t} - \sum\limits_{h:(hi)\in\mathcal{L}}\hat{p}_{hi,t} = \hat{p}_{i,t},\forall i\in\mathcal{N}\setminus\{0\},\forall t\in\mathcal{T} \label{eq:power_balance_2s} \\
&&\sum\limits_{j:(ij)\in\mathcal{L}}\hat{q}_{ij ,t} - \sum\limits_{h:(hi)\in\mathcal{L}}\hat{q}_{hi,t} = \hat{q}_{i,t},\forall i\in\mathcal{N}\setminus\{0\},\forall t\in\mathcal{T} \label{eq:reactivepower_balance_2s} \\
&& \nonumber \hat{p}_{i,t} = \hat{p}_{i,t}^{G} + p_{i,t}^{B-} - p_{i,t}^{B+}  + p_{i,t}^{PV} \\
&& \qquad\qquad\quad\quad - p_{i,t}^{L} + \hat{p}_{i,t}^{ls} - \hat{p}_{i,t}^{cur}, \forall i\in\mathcal{N}\setminus\{0\},\forall t\in\mathcal{T} \label{eq:netload_2s}\\
&&\hat{p}_{i,t}^{ls}, \, \hat{p}_{i,t}^{cur} \geq 0, \forall i\in\mathcal{N}\setminus\{0\},\forall t\in\mathcal{T} \label{eq:auxiliary} \\
&&\hat{q}_{i,t}=\hat{q}_{i,t}^{G}-q_{i,t}^{L},\forall i\in\mathcal{N}\setminus\{0\},\forall t\in\mathcal{T} \label{eq:reac_netload_2s} \\
&&q_{i,t}^{L}=p_{i,t}^{L}\tan\delta_{i,t},\forall i\in\mathcal{N}\setminus\{0\},\forall t\in\mathcal{T} \label{eq:reac_load_2s} \\
&&\hat{v}_{i,t}-\hat{v}_{j,t}=\frac{r_{ij}\hat{p}_{ij,t}+x_{ij}\hat{q}_{ij,t}}{1-\varphi_{ij}},\forall (i,j)\in\mathcal{L},\forall t\in\mathcal{T} \label{eq:voltage_2s} \\
&&-\overline{p}^{l} \leq \hat{p}_{ij,t} \leq \overline{p}^{l},\forall (i,j) \in\mathcal{L},\forall t\in\mathcal{T} \label{eq:power_limit_2s} \\
&&\hat{v}_{0,t} = 1,\forall t\in\mathcal{T} \label{eq:voltage0_limit_2s} \\
&&\underline{v} \leq \hat{v}_{i,t} \leq \overline{v},\forall i\in\mathcal{N}\setminus\{0\},\forall t\in\mathcal{T} \label{eq:voltage_limit_2s}
\end{eqnarray}

The adjustment variables defined in \eqref{eq:DGs_adjust}--\eqref{eq:export_adjust} capture deviations from the first-stage dispatch. The recourse constraints of DGs are given in \eqref{eq:DGs_recourse}--\eqref{eq:DGs_regulate_rampdn}, together with reactive power output limits in \eqref{eq:DGs_reactive}. The electricity exchange in the second stage is governed by \eqref{eq:import_limit_2s}--\eqref{eq:interface_flow_2s}. Constraints \eqref{eq:power_balance_2s}-\eqref{eq:reactivepower_balance_2s} model the active/reactive power balance equations. Constraints \eqref{eq:netload_2s}-\eqref{eq:reac_load_2s} represent net active/reactive loads. The lossy DistFlow formulation \cite{lossy_DistFlow} is employed to model voltage magnitude drop in lines in \eqref{eq:voltage_2s}. The parameter $\varphi_{ij}$ approximates loss terms, linearizing voltage calculations while preserving accuracy. The maximum power flow and voltage magnitude limits are enforced by constraints~\eqref{eq:power_limit_2s} and~\eqref{eq:voltage0_limit_2s}-\eqref{eq:voltage_limit_2s}, respectively.

\section{Confidence-Level-Based IGDT Framework}
To tackle uncertainties in renewable generation and load demand, the DN operation problem is formulated within the IGDT framework. IGDT provides a generic structure composed of three elements: system model, uncertainty modelling, and performance requirement~\cite{igdt_Ben_Haim}. Within this framework, robustness is evaluated by identifying the maximum uncertainty level while still meeting a predefined performance requirement. Building on this foundation, we reformulate the two-stage DN optimal operation model under an enhanced CL-IGDT framework.
\vspace{-10pt}
\subsection{Compact Formulation of CL-IGDT}

In our previous work~\cite{zhishengxiong}, IGDT was extended to CL-IGDT by enhancing uncertainty modeling with confidence-level-driven uncertainty sets and switching the objective function from the robustness measure to a confidence-level-inspired robustness measure $\alpha$. Robustness is quantified through uncertainty sets that capture asymmetric features and represent multiple uncertainty sources in a decoupled manner through a unified $\alpha$. Building on this, we further enhance the uncertainty modelling by introducing a generalized uncertainty set. For clarity, we first express the CL-IGDT framework in compact form, referred to as \textbf{CL-IGDT (functional)}. $\mathcal{U}(\alpha)$ remains abstract and will be specified in the next subsection.

\vspace{1.5mm}
\!\textbf{CL-IGDT (functional):}
\vspace{-1mm}
\begin{align}
\alpha^*(&\Lambda) = \max_{\substack{\alpha\in[0,1],\\ \bm{x},\, \bm{y}(\cdot)}} ~\alpha \label{eq:cldigdt-obj} \\
\text{s.t.}~  &\bm{A} \bm{x} \leq \bm{b}, \label{eq:cldigdt-con1} \\
& \bm{B}_1 \bm{x} + \bm{B}_2 \bm{y}(\bm{u}) \geq \bm{d}- \bm{E} \bm{u}, \forall \bm{u} \in \mathcal{U}(\alpha), \label{eq:cldigdt-con2} \\
& \bm{c}_1^\top \bm{x} + \max_{\bm{u} \in \mathcal{U}(\alpha)} \bm{c}_2^\top \bm{y}(\bm{u}) \leq \Lambda, \label{eq:cldigdt-con3} \\
& \bm{x}\in\left\lbrace0,1 \right\rbrace^{n_{\rm x_1}}\times \mathbb{R}_+^{n_{\rm x_2}}, \, \bm{y}(\bm{u}) \in\mathbb{R}_+^{n_{\rm y}}, \label{eq:cldigdt-con4}
\end{align}
where
\begin{eqnarray}
&&\mathcal{U}(\alpha)=\left\{\bm{u}\in\mathbb{R}_{+}^{n_{\rm u}}\;\big|\; \bm{F}(\alpha) \bm{u} \leq \bm0\right\}. \label{eq:Uset}
\end{eqnarray}

In this formulation, $\bm{x}$ denotes the first-stage decision vector, including $p_{i,t}^{G}$, $p_{t}^{I}$, $p_{t}^{E}$, $p_{i,t}^{B-}$, $p_{i,t}^{B+}$, $z^G_{i,t}$, $z^B_{i,t}$ and $z^l_{t}$. The second-stage decision vector $\bm{y}$ consists of $\hat{p}_{i,t}^{G}$, $\hat{q}_{i,t}^{G}$, $\hat{p}_{t}^{I}$, $\hat{p}_{t}^{E}$, $\hat{p}_{i,t}^{ls}$ and $\hat{p}_{i,t}^{cur}$. The uncertainty vector $\bm{u}$ represents renewable generation and load demand realizations. The integers ${n_{\rm x_1}}/{n_{\rm x_2}}$, ${n_{\rm u}}$ and ${n_{\rm y}}$ denote the dimensions of $\bm{x}$, $\bm{u}$ and $\bm{y}$, respectively. The coefficient vectors $\bm{b}$, $\bm{c_1}$, $\bm{c_2}$, $\bm{d}$ and constraint matrices $\bm{A}$, $\bm{B_1}$, $\bm{B_2}$, $\bm{E}$, $\bm{F}$ are of appropriate dimensions, while ${m_{\rm u}}$ and ${m_{\rm y}}$ denote the numbers of rows of $\bm{F}$ and $\bm{B}_2$, respectively. Constraints~\eqref{eq:cldigdt-con1} compactly represent \eqref{eq:DGs_limit}-\eqref{eq:power_limit}. Constraint~\eqref{eq:cldigdt-con2} captures the coupling between first- and second-stage decisions, corresponding to \eqref{eq:DGs_adjust}-\eqref{eq:voltage_limit_2s}. The costs in~\eqref{eq:obj1} and~\eqref{eq:obj2} are integrated into the constraint~\eqref{eq:cldigdt-con3}, which enforces the predefined financial budget~$\Lambda$.
\vspace{-6pt}
\subsection{Uncertainty Modeling} \label{subsec:uncertainty_modeling}
\subsubsection{Confidence-Level-Driven Uncertainty Set} Following~\cite{zhishengxiong}, we consider a confidence-level-driven uncertainty set constructed from confidence intervals derived from distributional information. Specifically, the uncertainty set is defined by
\begin{equation}\label{eq:idm_uncertainty set}
\mathcal{U}_{CL}(\alpha) := [\underline{\bm{u}}^{\dagger}(\alpha), \overline{\bm{u}}^{\dagger}(\alpha)]
\end{equation}
for some
\begin{equation}
\begin{aligned}
(\underline{\bm{u}}^{\dagger}(\alpha), \overline{\bm{u}}^{\dagger}(\alpha)) \in &\arg\min_{(\underline{\bm{u}},\overline{\bm{u}})} \{\|\overline{\bm{u}} - \underline{\bm{u}}\|_{1} \,| \\
& \underline{F}_{u_t}(\overline{u}_{t}) - \overline{F}_{u_t}(\underline{u}_{t}) \geq \alpha, \forall t \in \mathcal{T}\}.
\end{aligned}
\end{equation}
where the parameter $\alpha$ here serves as a decision variable in CL-IGDT, implying that the resulting set exhibits DDU property. For each component $u_t$ of the uncertainty, $\underline{F}_{u_t}$ and $\overline{F}_{u_t}$ denote the lower- and upper-bound CDFs associated with an ambiguity set $\mathcal{A}_t$ of plausible CDFs. In particular, 
\begin{equation}
\underline{F}_{u_t}(u):=\min\{F_{u_t}(u) \,|\, F_{u_t}\in\mathcal A_t\},
\end{equation}
and likewise for $\overline{F}_{u_t}(\underline{u}_{t})$. The ambiguity sets can be constructed in a data-driven manner following \cite{idm_Li_Peng}, thereby avoiding reliance on a precise probability distribution.

As illustrated in Fig.~\ref{fig:idm_uncertainty set}, this uncertainty modeling is well suited to asymmetric uncertainties. Owing to its confidence-driven interpretation, the uncertainty set typically grows nonlinearly with respect to $\alpha$. This differs from the IGDT framework, where a horizon-scaling variable controls the size of the uncertainty set, yielding linear and uniform expansion across all dimensions.

\begin{figure}[!t]
\centering
\includegraphics[width=2.6in]{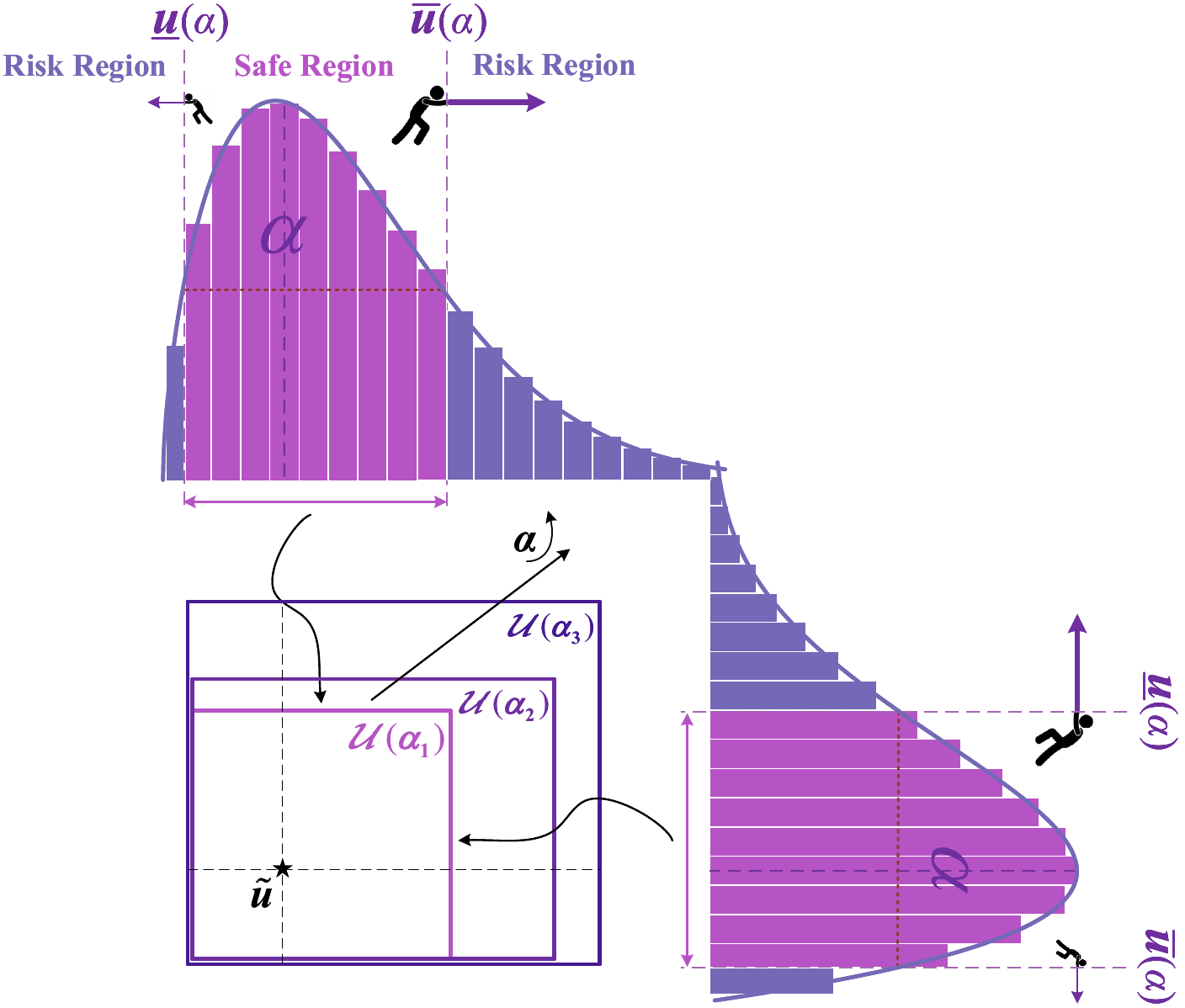}
 \caption{Illustration of a two-dimensional confidence-level-driven uncertainty set. The essence of $\mathcal{U}(\alpha)$ can be interpreted as a box uncertainty set, which is constructed from confidence intervals parameterized by $\alpha$.}
\label{fig:idm_uncertainty set}
\end{figure}

\subsubsection{Generalized Uncertainty Set} 
To obtain a tighter uncertainty representation beyond component-wise bounds, we draw inspiration from the budget uncertainty set~\cite{RO} and introduce a structural parameter~$\Gamma$ that couples deviations across dimensions to refine $\mathcal U_{CL}(\alpha)$. This yields the generalized uncertainty set
\begin{equation}
\label{eq:generalized_uncertainty_set}
\begin{aligned}
\mathcal{U}&(\alpha) 
:=  \{\bm{u} - \bm{\epsilon}^{-} \circ (\bm{u} - \underline{\bm{u}}^{\dagger}(\alpha)) 
+ \bm{\epsilon}^{+} \circ (\overline{\bm{u}}^{\dagger}(\alpha) - \bm{u})\,| \\ 
&\;\,\bm{u} \in \mathcal{U}_{CL}(\alpha),\frac{\|\bm{\epsilon}^{-} + \bm{\epsilon}^{+}\|_1}{|\mathcal{T}|} \leq \Gamma, \; \bm{0} \leq \bm{\epsilon}^{-}, \bm{\epsilon}^{+} \leq \bm{1} \Big\},
\end{aligned}
\end{equation}
where $\circ$ denotes Hadamard product. This formulation can be expressed in the compact form of~\eqref{eq:Uset}.

The proposed formulation provides explicit control of $\mathcal{U}(\alpha)$, with $\alpha$ governing its size and $\Gamma$ its structure, as illustrated in Fig.~\ref{fig:generalized Uset}. Specifically, the proposed uncertainty set can expand with respect to $\alpha$ in a linear or nonlinear, symmetric or asymmetric manner, depending on the distributional characteristics. Moreover, $\alpha$ controls all components of $\bm{u}$, while their uncertainty bounds are determined by the corresponding distributions (e.g., those of solar power and load demand), and $\Gamma$ captures dimensional correlations and structural coupling.

Thanks to this expressiveness, $\mathcal{U}(\alpha)$ encompasses most existing uncertainty modeling constructions in IGDT-based energy system optimization as special cases, as discussed in Section~\ref{sec:intro}. This makes the CL-IGDT framework applicable to diverse uncertainty profiles in DNs and supports a broader range of operational decision preferences.

\begin{figure}[!t]
\centering
\includegraphics[width=3.2in]{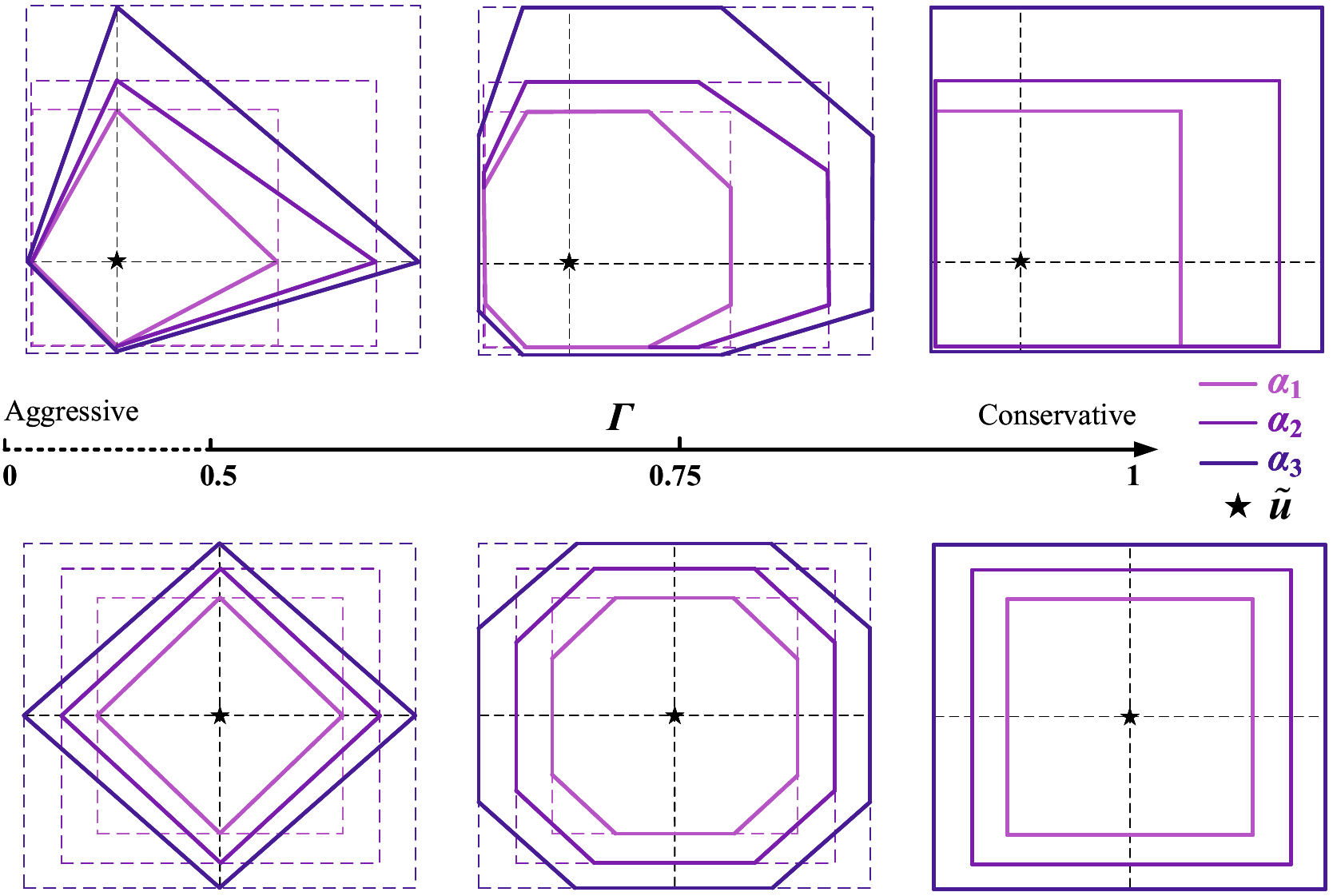}
\caption{Illustration of two-dimensional generalized uncertainty sets. The first row shows that $\mathcal{U}(\alpha)$ can represent asymmetric uncertainties, with off-centered $\tilde{\bm{u}}$ and nonlinear growth in $\alpha$, while the second row shows symmetric sets with potentially linear expansion. These patterns reflect underlying characteristics of uncertainty captured from historical data. With size controlled by $\alpha$ and shape by $\Gamma$, the proposed formulation represents diverse uncertainty geometries and structured dependencies. The first and third cases in the second row correspond to uncertainty modeling approaches in IGDT-based energy system studies in~\cite{w_igdt, idgt_binary1, idgt_binary2}, which are covered as special cases of $\mathcal{U}(\alpha)$.}
\label{fig:generalized Uset}
\end{figure}
\vspace{-10pt}
\subsection{Some Theoretical Properties} \label{subsec: properties}
CL-IGDT (functional) is structured as a two-stage optimization problem. The first-stage decision $\alpha$ yields a maximum $\mathcal{U}(\alpha)$ such that the remaining decision variables are feasible for all realizations in this set. The second-stage decision $\bm{y}(\cdot)$ is a function of~$\bm{u}$ that adapts to the uncertainty realization.

\begin{assumption} \label{assumption:recourse}
The second-stage problem admits relatively complete recourse, namely, the set 
\begin{align}
\mathcal{Y}(\bm{x},\bm{u}):=\{\bm{y}\in\mathbb{R}_+^{n_{\rm y}} \; |\; \bm{B}_2 \bm{y} \geq \bm{d} - \bm{B}_1 \bm{x} - \bm{E} \bm{u}\}  
\end{align}
is nonempty for each  
\begin{align}
\bm{x} \in 
\mathcal{X}:=\{\bm{x}\in\left\lbrace0,1 \right\rbrace^{n_{\rm x_1}}\times \mathbb{R}_+^{n_{\rm x_2}}\;|\;\bm{Ax}\leq\bm{b}\} \label{eq:cldigdt-nested-con2}
\end{align}
and $\bm{u} \in\mathcal U:=\cup_{\alpha\in[0,1]}\mathcal{U}(\alpha)$, and each recourse value
\begin{align}
W(\bm{x}, \bm{u})
:= \min_{\bm{y} \in \mathcal{Y}(\bm{x},\bm{u})} \bm{c}_2^\top \bm{y}
\label{eq:recourse_definition}
\end{align}
is always finite.
\end{assumption}

Thanks to nonnegativity of the variables $\hat{p}_{i,t}^{ls}$ and $\hat{p}_{i,t}^{cur}$, Assumption 1 is naturally satisfied for two-stage DN operation problems. While the functional form of CL-IGDT is conceptually clear, it is computationally intractable. We therefore convert it into an equivalent nested form, as is common in adaptive robust optimization reformulation~\cite{RO}.

\begin{prop}\label{prop-1}
\textbf{CL-IGDT (functional)} is equivalent to its nested form
\hspace{-0.8em} 

\noindent \textbf{CL-IGDT (nested):}
\begin{align}
\alpha^*(&\Lambda) = \max_{\substack{\alpha \in [0,1],\bm{x}\in\mathcal{X}}} \alpha \label{eq:cldigdt-nested-obj} \\
\textup{s.t.}~ 
& \bm{c}_1^\top \bm{x} + 
\max_{\bm{u} \in \mathcal{U}(\alpha)} 
\min_{\bm{y} \in \mathcal{Y}(\bm{x},\bm{u})} \bm{c}_2^\top \bm{y} \leq \Lambda.  \label{eq:cldigdt-nested-con1}
\end{align}
\end{prop}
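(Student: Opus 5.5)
The plan is to show that the two problems have the same feasible pairs $(\alpha,\bm{x})$. Since both maximize $\alpha$ over $[0,1]$, the optimal values $\alpha^*(\Lambda)$ then coincide. Throughout, fix $\alpha\in[0,1]$ and $\bm{x}\in\mathcal{X}$. Call $(\alpha,\bm{x})$ \emph{functionally feasible} if some policy $\bm{y}(\cdot)$ satisfies \eqref{eq:cldigdt-con2}--\eqref{eq:cldigdt-con4}. Note that \eqref{eq:cldigdt-con1} and the $\bm{x}$-part of \eqref{eq:cldigdt-con4} are exactly $\bm{x}\in\mathcal{X}$ from \eqref{eq:cldigdt-nested-con2}.

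\emph{Step 1 (functional $\Rightarrow$ nested).} Let $\bm{y}(\cdot)$ be a feasible policy. Then \eqref{eq:cldigdt-con2} and \eqref{eq:cldigdt-con4} give $\bm{y}(\bm{u})\in\mathcal{Y}(\bm{x},\bm{u})$ for every $\bm{u}\in\mathcal{U}(\alpha)$. By \eqref{eq:recourse_definition}, this means $W(\bm{x},\bm{u})\le \bm{c}_2^\top\bm{y}(\bm{u})$. Taking the supremum over $\bm{u}\in\mathcal{U}(\alpha)$ and using \eqref{eq:cldigdt-con3}, we obtain
\[
\bm{c}_1^\top\bm{x}+\sup_{\bm{u}\in\mathcal{U}(\alpha)}W(\bm{x},\bm{u})\le \bm{c}_1^\top\bm{x}+\sup_{\bm{u}\in\mathcal{U}(\alpha)}\bm{c}_2^\top\bm{y}(\bm{u})\le\Lambda,
\]
which is \eqref{eq:cldigdt-nested-con1}.

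\emph{Step 2 (nested $\Rightarrow$ functional).} Assume \eqref{eq:cldigdt-nested-con1} holds. For each $\bm{u}\in\mathcal{U}(\alpha)\subseteq\mathcal{U}$, Assumption~\ref{assumption:recourse} gives that $\mathcal{Y}(\bm{x},\bm{u})$ is a nonempty polyhedron and that $W(\bm{x},\bm{u})$ is finite. By LP theory (a linear program with a finite optimal value attains it), the minimum in \eqref{eq:recourse_definition} is attained. Define $\bm{y}(\bm{u})$ as any minimizer; the axiom of choice suffices here, or a measurable selection if regularity is wanted. This policy satisfies \eqref{eq:cldigdt-con2} and \eqref{eq:cldigdt-con4} by construction. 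It also satisfies $\bm{c}_2^\top\bm{y}(\bm{u})=W(\bm{x},\bm{u})$, so \eqref{eq:cldigdt-con3} follows directly from \eqref{eq:cldigdt-nested-con1}.

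\emph{Step 3 (maxima are well posed).} The inner ``max'' over $\mathcal{U}(\alpha)$ should be read as a supremum. The argument in Steps 1--2 only compares suprema, so the equivalence holds either way. If attainment is needed, I would argue as follows. The set $\mathcal{U}(\alpha)$ is a polytope: it is the image of the bounded polytope $\{(\bm{u},\bm{\epsilon}^-,\bm{\epsilon}^+)\}$ in \eqref{eq:generalized_uncertainty_set} under a continuous map. The function $W(\bm{x},\cdot)$ is a finite, convex, piecewise-linear function, by LP duality, being the maximum of finitely many affine functions of $\bm{u}$ indexed by the dual vertices. Such a function attains its maximum at a vertex.

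The main obstacle is Step 2. It is the only place where Assumption~\ref{assumption:recourse} is essential: we need both existence of a minimizer (nonemptiness plus finiteness) and the pointwise construction of a single policy $\bm{y}(\cdot)$ that works for all $\bm{u}$ simultaneously. The remaining steps are routine inequality chasing.
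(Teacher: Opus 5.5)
Your proof is correct and is the standard adaptive-robust-optimization argument the paper relies on; the paper states Proposition~\ref{prop-1} without a written proof and simply cites the usual ARO reformulation, and your Step~1 (pointwise bound $W(\bm{x},\bm{u})\le\bm{c}_2^\top\bm{y}(\bm{u})$, then a supremum) and Step~2 (a policy built from pointwise LP minimizers, which exist because Assumption~\ref{assumption:recourse} makes every recourse LP feasible with finite value) are exactly that reformulation. The one slip is in the optional Step~3, where the map $(\bm{u},\bm{\epsilon}^-,\bm{\epsilon}^+)\mapsto\bm{u}-\bm{\epsilon}^-\circ(\bm{u}-\underline{\bm{u}}^{\dagger}(\alpha))+\bm{\epsilon}^+\circ(\overline{\bm{u}}^{\dagger}(\alpha)-\bm{u})$ is not affine (it contains the products $\bm{\epsilon}^{\pm}\circ\bm{u}$) and a continuous image of a polytope is only guaranteed to be compact, so your stated reason does not establish the polytope/vertex claim---although compactness plus continuity of $W(\bm{x},\cdot)$ (a maximum of finitely many affine functions on $\mathcal{U}$) already gives attainment, which is all the equivalence needs, and polyhedrality does hold for a different reason: each coordinate of the image stays in $[\underline{u}^{\dagger}_t(\alpha),\overline{u}^{\dagger}_t(\alpha)]$ and $\bm{\epsilon}^{\pm}=\bm{0}$ recovers every point of $\mathcal{U}_{CL}(\alpha)$, so the set as literally written is that box.
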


To obtain an efficient solution strategy for CL-IGDT and support the algorithmic design in Section~\ref{sec: solution_methodology}, we first introduce a family of TSRO problems parameterized by $\alpha$ following~\cite{RO}. Specifically, for every $\alpha$,
\begin{align}
\textbf{TSRO:} \quad
\Lambda^*(\alpha) = \min_{\bm{x} \in \mathcal{X}}~ & \bm{c}^\top_{1} \bm{x} + \max_{\bm{u} \in \mathcal{U}(\alpha)}\min_{\bm{y} \in \mathcal{Y}(\bm{x},\bm{u})} \bm{c}^\top_{2} \bm{y}. \label{eq:tsro_nested-obj}
\end{align}

The value of the second-stage problem is denoted by
\begin{align}
    Q(\alpha, \bm{x}) =\max_{\bm{u} \in \mathcal{U}(\alpha)}\min_{\bm{y} \in \mathcal{Y}(\bm{x},\bm{u})} \bm{c}^\top_{2} \bm{y} \label{eq:Q_x}.
\end{align}

This bilevel program can be converted into
\begin{align}
     Q(\alpha, \bm{x}) = \max_{\bm{u} \in \mathcal{U}(\alpha), \bm{\pi} \in \Pi} \bm{(-Eu)}^\top \bm{\pi} + \bm{(d-B_1x)}^\top \bm{\pi}\label{eq:Q_x_dual}
\end{align}
by dualizing the inner LP. Here the feasible set of the dual variable $\bm\pi$ is given by
\begin{eqnarray}
&&\Pi= \{\bm{\pi}\in \mathbb{R}_+^{m_{\rm y}}\;|\;\bm{B}_2^\top\bm{\pi}\leq\bm{c}_2\}. \label{eq:dual_set}
\end{eqnarray}

We next establish the link between CL-IGDT and the family of TSRO problems in~\eqref{eq:tsro_nested-obj}. 

\begin{prop} \label{prop-2}
Under Assumption~\ref{assumption:recourse}, the optimal value $\alpha^*$ of \textbf{CL-IGDT (nested)} equals that of
\begin{align}
\textbf{CL-IGDT (single):} \quad
\alpha^* = \max_{\substack{\alpha \in [0,1]}} & \alpha \\
\textup{s.t.}\; 
& \Lambda^*(\alpha)\le\Lambda, 
\end{align}
where each $\Lambda^*(\alpha)$, $\alpha\in[0,1]$ is the optimal value of the corresponding TSRO problem in \eqref{eq:tsro_nested-obj}. Any optimal $\bm{x}^*$ of \eqref{eq:tsro_nested-obj} with $\alpha\equiv\alpha^*$ is also optimal for the CL-IGDT formulation.  
\end{prop}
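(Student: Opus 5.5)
The plan is to compare the feasible sets of the two problems directly. For each $\alpha\in[0,1]$, write $G(\alpha,\bm{x}) := \bm{c}_1^\top\bm{x} + Q(\alpha,\bm{x})$. By Assumption~\ref{assumption:recourse}, $W(\bm{x},\bm{u})$ is finite for every $\bm{x}\in\mathcal{X}$ and $\bm{u}\in\mathcal{U}(\alpha)\subseteq\mathcal{U}$. Hence $G$ takes values in $\mathbb{R}\cup\{+\infty\}$, with $+\infty$ possible only if the supremum over $\bm{u}$ is unbounded. Constraint~\eqref{eq:cldigdt-nested-con1} then says exactly that $G(\alpha,\bm{x})\le\Lambda$. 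By definition, $\Lambda^*(\alpha)=\inf_{\bm{x}\in\mathcal{X}} G(\alpha,\bm{x})$.

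The first step is to show that the feasible $\alpha$'s coincide.

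\emph{Nested to single.} If $(\alpha,\bm{x})$ is feasible for \textbf{CL-IGDT (nested)}, then $\Lambda^*(\alpha)\le G(\alpha,\bm{x})\le\Lambda$. So $\alpha$ is feasible for \textbf{CL-IGDT (single)}.

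\emph{Single to nested.} Suppose $\Lambda^*(\alpha)\le\Lambda$. We need some $\bm{x}\in\mathcal{X}$ with $G(\alpha,\bm{x})\le\Lambda$, which requires the infimum defining $\Lambda^*(\alpha)$ to be attained. I expect this attainment to be the main obstacle. I would first argue it from the structure of the problem:
\begin{itemize}
\item $\mathcal{X}$ is a finite union, over binary patterns, of polyhedra, which are bounded by the physical limits \eqref{eq:DGs_limit}--\eqref{eq:power_limit}.
\item For fixed $\alpha$, the dual form \eqref{eq:Q_x_dual} shows that $Q(\alpha,\cdot)$ is a pointwise maximum of affine functions of $\bm{x}$. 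It is therefore convex and lower semicontinuous on each polyhedral piece, and it is finite because $\mathcal{U}(\alpha)$ is a compact polytope and $W$ is finite.
\item A lower semicontinuous function on a compact set attains its minimum. Taking the minimum over finitely many binary patterns gives a minimizer $\bm{x}^*$.
\end{itemize}
If compactness of $\mathcal{X}$ is not assumed, I would use the standard fact that TSRO with a polyhedral uncertainty set and relatively complete recourse reduces to a finite MILP over the extreme points of $\mathcal{U}(\alpha)$, so the optimum is attained whenever it is finite. Given attainment, $(\alpha,\bm{x}^*)$ is feasible for the nested problem.

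Once the feasible $\alpha$-sets agree, the two maximizations over $\alpha$ have the same supremum. I would also note that the supremum is attained in both problems, or else state the result as equality of suprema.

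For the last claim, take $\alpha=\alpha^*$ and any optimal $\bm{x}^*$ of \eqref{eq:tsro_nested-obj}. Then $G(\alpha^*,\bm{x}^*)=\Lambda^*(\alpha^*)\le\Lambda$, because $\alpha^*$ is feasible for the single problem. So $(\alpha^*,\bm{x}^*)$ is feasible for the nested problem and achieves the optimal value $\alpha^*$.

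By Proposition~\ref{prop-1}, the nested problem is equivalent to \textbf{CL-IGDT (functional)}, so $\bm{x}^*$ together with an optimal recourse policy $\bm{y}(\bm{u})\in\arg\min_{\bm{y}\in\mathcal{Y}(\bm{x}^*,\bm{u})}\bm{c}_2^\top\bm{y}$ is optimal for the CL-IGDT formulation. This recourse policy is well defined by Assumption~\ref{assumption:recourse}.
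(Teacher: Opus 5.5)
Your proposal is correct and follows the paper's proof. Both arguments first show that the TSRO minimum $\Lambda^*(\alpha)$ is attained at some $\tilde{\bm x}\in\mathcal X$, then show that the feasible $\alpha$-sets of \textbf{CL-IGDT (nested)} and \textbf{CL-IGDT (single)} coincide via the same two inclusions. The differences are minor: the paper obtains attainment by writing $Q(\alpha,\cdot)$ as a finite maximum over extreme-point pairs of $\mathcal U(\alpha)$ and $\Pi$, hence continuous on the (assumed compact) $\mathcal X$, whereas you use lower semicontinuity of a supremum of affine functions; you also spell out the final claim about $\bm x^*$ and the caveat on attainment in $\alpha$, which the paper leaves implicit.
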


\begin{proof}
We first show that there exists a feasible first-stage decision $\tilde{\bm x}\in\mathcal X$ for which the optimal value of the TSRO problem is attained, i.e.,
\begin{align}
\Lambda^*(\alpha)=
\bm{c}^\top_{1} \tilde{\bm x} + Q(\alpha, \tilde{\bm x}),
\end{align}
where $Q(\alpha,\bm x)$ is given by \eqref{eq:Q_x_dual} for each $\bm x\in\mathcal X$. 

By construction, $\mathcal U(\alpha)$ is compact. Under Assumption~\ref{assumption:recourse}, the optimal solution of $Q(\alpha,\bm x)$ is attained at extreme-point pairs of $\mathcal{U}(\alpha)$ and $\Pi$, so that
\begin{equation}
\begin{aligned}
&Q(\alpha,\bm x) = \\
&\max_{\substack{
\nu = 1,\dots,|\mathcal{P}_{\mathcal{U}(\alpha)}|,\\
\ell = 1,\dots,|\mathcal{P}_{\Pi}|
}}
(\bm{-E}\bm{u}^{\nu})^\top \bm{\pi}^{\ell}
+ \bm{(d-B_1\bm x)}^\top \bm{\pi}^{\ell}
\end{aligned}
\label{eq:Q_x_piecewise}
\end{equation}
where $\mathcal{P}_{\mathcal{U}(\alpha)}$ and $\mathcal{P}_{\Pi}$ denote the sets of extreme points of $\mathcal{U}(\alpha)$ and $\Pi$, respectively.

Hence, for any given $\alpha$, $Q(\alpha,\bm x)$ is piecewise affine in $\bm x$, and thus the objective function $c_1^\top \bm x + Q(\alpha,\bm x)$ is continuous on the compact set $\mathcal X$. Therefore, its optimal value is attained at some first-stage decision $\tilde{\bm x}\in\mathcal X$.

Next, assume that the pair $(\tilde\alpha,\tilde{\bm x})$ is feasible for CL-IGDT~(nested), i.e., 
\begin{align}
\bm{c}^\top_{1} \tilde{\bm{x}} + \max_{\bm{u} \in \mathcal{U}(\tilde\alpha)}\min_{\bm{y} \in \mathcal{Y}(\tilde{\bm{x}},\bm{u})} \bm{c}^\top_{2} \bm{y}\le\Lambda. 
\end{align}
Then $\Lambda^*(\tilde\alpha)\le \Lambda$, so $\tilde\alpha$ is also feasible for CL-IGDT~(single). Hence, the optimal value of CL-IGDT~(nested) does not exceed that of CL-IGDT~(single). 

Conversely, assume that $\tilde\alpha$ is feasible for CL-IGDT~(single). Since $\Lambda^*(\tilde\alpha)$ is attained at some $\tilde{\bm x} \in \mathcal{X}$, we have 
\begin{align}
\Lambda^*(\alpha)=\bm{c}^\top_{1} \tilde{\bm x}+ Q(\tilde\alpha,\tilde{\bm x})\le\Lambda.
\end{align}
Therefore, $(\alpha,\tilde{\bm x})$ is also feasible for CL-IGDT~(nested), and so the optimal value of CL-IGDT~(single) does not exceed that of CL-IGDT~(nested). This concludes the proof.  
\end{proof}

\section{Solution Methodology} \label{sec: solution_methodology}
The theoretical connection between CL-IGDT and TSROs enables the development of an efficient solution algorithm. According to Proposition~\ref{prop-2}, solving CL-IGDT reduces to identifying the largest~$\alpha$ for which the optimal value~$\Lambda^*(\alpha)$ of the corresponding TSRO satisfies the predefined budget constraint. This insight forms the basis of our solution strategy: searching for the maximum~$\alpha$ such that~$\Lambda^*(\alpha)$ meets the budget requirement. For several practically relevant distribution classes, such as unimodal ones, the confidence-level-based uncertainty set $\mathcal U(\alpha)$ can be chosen nested, i.e., $\mathcal U(\alpha_1)\subseteq\mathcal U(\alpha_2)$ for $\alpha_1\le\alpha_2$. Under this assumption, it follows from \eqref{eq:tsro_nested-obj} that $\Lambda^*(\alpha)$ is monotonic in $\alpha$, thereby facilitating a one-dimensional search. In this way, solving the CL-IGDT problem is transformed into solving a sequence of TSROs, avoiding direct reformulation of the bilevel structure, whose lower level embeds a sub-optimization problem associated with the generalized uncertainty set. An overview of the algorithmic workflow is illustrated in Fig.~\ref{fig:flowchart}.
\begin{figure}[!t]
\centering
\includegraphics[width=2.8in]{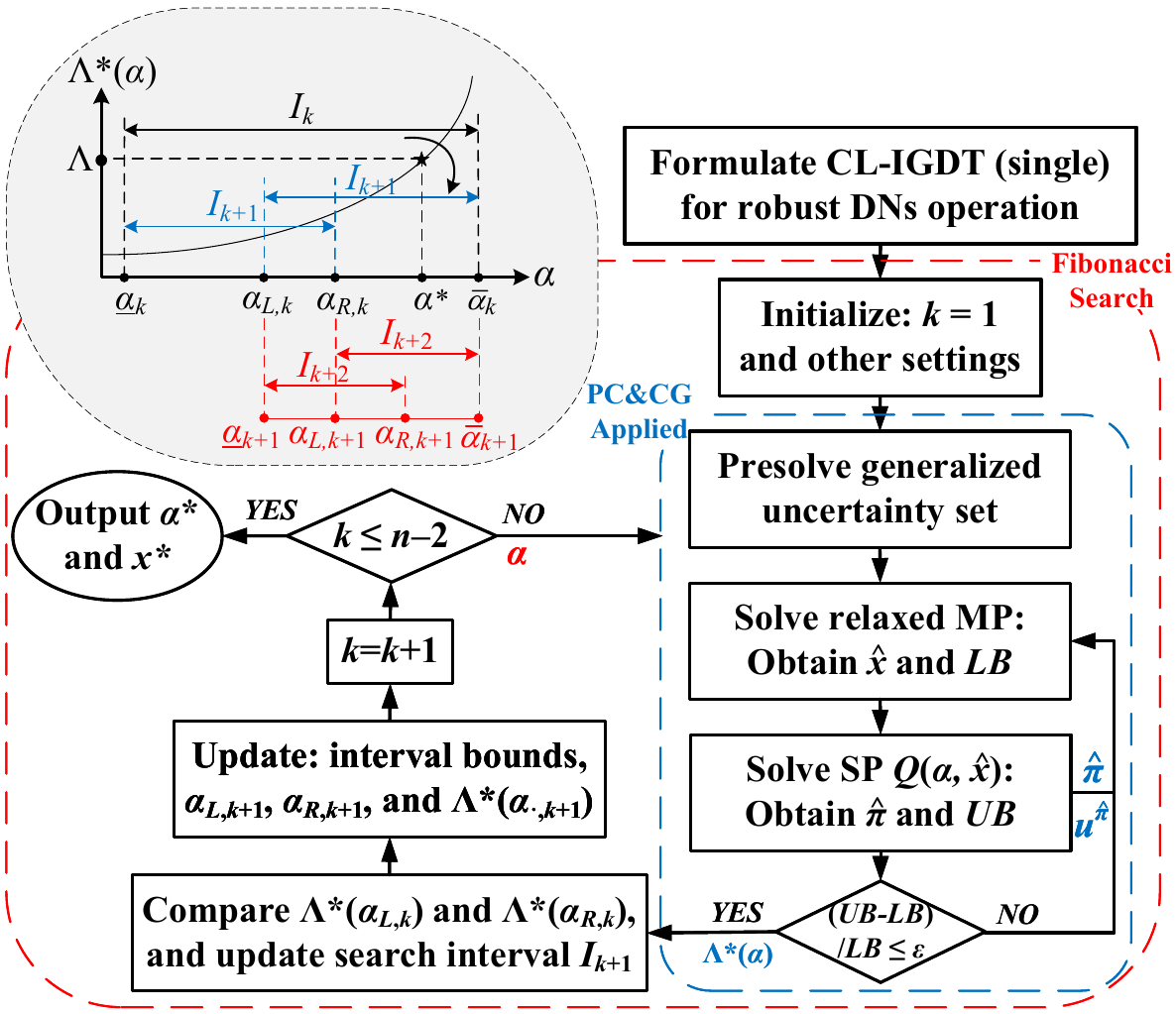}
\caption{Flowchart of F-PC\&CG algorithm.}
\label{fig:flowchart}
\end{figure}
\vspace{-10pt}
\subsection{Customized Fibonacci Search} \label{subsec:customized_fibanacci_search}

To locate the optimal $\alpha^*$, we propose a customized Fibonacci search tailored to CL-IGDT. The method is adapted from the classical Fibonacci search, which locates the minimizer of a unimodal function in unconstrained optimization and achieves efficient interval reduction through Fibonacci ratios~\cite{fibonacci_search}.
Here, however, the goal is to identify the maximum~$\alpha$ such that $\Lambda^*(\alpha) \leq \Lambda$. Accordingly, we retain the interval-shrinking mechanism while revising the interval selection rule.

As illustrated in Fig.~\ref{fig:flowchart}, the interval $I_k$ at iteration $k$ shrinks to $I_{k+1}$ after comparing $\Lambda^*(\alpha_{L,k})$ and $\Lambda^*(\alpha_{R,k})$ at two interior points determined by the Fibonacci sequence~$F_k$. The subinterval that does not contain $\alpha^*$ is discarded until the final interval~$I_n$ reaches a prescribed tolerance. The length of~$I_n$ after $n - 2$ iterations can be predetermined using~\eqref{eq:fib_def}--\eqref{eq:I_n}. For example, if the required tolerance is 0.5\%, i.e., $I_{n} \leq 0.005$, then with $I_{1} = 1$, we obtain $n = 13$ and $F_{13} = 233$, yielding $I_{13} \approx 0.0043$.
\begin{align}
&F_k =
\begin{cases}
1, & k = 0, 1, \\
F_{k-1} + F_{k-2}, & k \geq 2,
\end{cases} \label{eq:fib_def} \\
&I_{k+2} = \frac{F_{n-k-1}}{F_{n-k}} I_{k+1}, \label{eq:I_rec} \\
&I_n = \frac{I_1}{F_n}. \label{eq:I_n}
\end{align}

Since the decision variable also serves as the objective in CL-IGDT, prescribing a tolerance on $\alpha$ directly bounds the optimal value, making Fibonacci search particularly suitable. The customized interval selection rule is illustrated in Appendix~\ref{appen-2}. Specifically, the right interval $I_{k+1} = [\alpha_{L,k}, \overline{\alpha}_k]$ is retained if $\Lambda^*(\alpha_{L,k}) \le \Lambda^*(\alpha_{R,k}) < \Lambda$; otherwise, the left interval $I_{k+1} = [\underline{\alpha}_k, \alpha_{R,k}]$ is chosen.

This defines the $\alpha$-search mechanism for selecting the next candidate~$\alpha$ to evaluate. The value~$\Lambda^*(\alpha)$ for each candidate is then computed through the tailored TSRO procedure described in Sections~\ref{sec: dual-reformulation} and~\ref{sec: F-PCCG}.
\vspace{-12pt}
\subsection{Dual-based Reformulation for TSRO} \label{sec: dual-reformulation}
Each TSRO instance corresponds to an uncertainty boundary determined by~$\alpha$. This boundary defines~$\mathcal{U}_{CL}(\alpha)$ and is obtained through a discrete search based on the shortest confidence-interval construction in~\cite{zhishengxiong}, implemented here as a preprocessing step. Once the boundary is established, $\mathcal{U}(\alpha)$ becomes a polyhedral uncertainty set, so the resulting TSRO problem can be solved using the C\&CG algorithm~\cite{CCG}.

While classical C\&CG framework is applicable, its standard formulation does not explicitly expose a structure that can be exploited across different TSRO instances. To address this, we propose an alternative dual-based reformulation, whose advantages will be further elaborated in the next subsection.

Recall that $Q(\alpha, \bm{x})$ can be expressed as the bilinear optimization problem in~\eqref{eq:Q_x_dual}, where $\Pi$ is a fixed polyhedron independent of both $\alpha$ and $\bm{x}$. Given any $\bm\pi \in \mathcal{P}_\Pi$, the set of optimal $\bm{u}$ in~\eqref{eq:Q_x_dual} can be determined by the KKT optimality conditions of the LP $\max_{\bm{u} \in \mathcal{U}(\alpha)} \bm{(-Eu)}^\top \bm{\pi}$, namely,
\begin{align}
\mathcal{OU}(\alpha,\bm{\pi})
=\big\{
&\bm{u}\in\mathbb{R}_{+}^{n_{\rm u}},\ \bm{\lambda}\in\mathbb{R}_{+}^{m_{\rm u}}
\ \big|\ 
\bm{F}(\alpha)\bm{u}\le \bm{0}, \\
&\bm{E}^\top\bm{\pi}+\bm{F}^\top\bm{\lambda} \ge \bm{0}, \\
&\bm{\lambda}\circ\big(\bm{F}(\alpha)\bm{u}\big)=\bm{0}, \label{eq:ou-1} \\
&\bm{u}\circ\big(\bm{E}^\top\bm{\pi}+\bm{F}^\top\bm{\lambda}\big)=\bm{0}
\big\}, \label{eq:ou-2}
\end{align}
where $\bm\lambda$ is the Lagrangian multiplier of \eqref{eq:Uset}. 

The optimal $\bm{u}$ can be expressed parametrically in terms of $\alpha$ and $\bm\pi$. This insight is key to the reformulation in \textbf{Theorem~\ref{theo:1}}. 

\begin{theorem} \label{theo:1}
For each $\alpha\in[0,1]$, \textbf{TSRO} is equivalent to the  single-level optimization problem

\hspace{-0.8em} 
\noindent\textbf{TSRO-(single):}
\begin{align}
\Lambda^*(\alpha) &= \min \quad  \bm{c}^\top_1 \bm{x} + \eta \\
{\rm s.t.} \quad & \bm{A}\bm{x} \leq \bm{b}, \label{eq:x_set1}  \\
& \forall \bm{\pi} \in \mathcal{P}_{\Pi}: 
\begin{cases} 
\eta - \bm{c}_2^\top \bm{y}^{\bm\pi}\geq 0, \bm{y}^{\bm\pi} \in \mathbb{R}_+^{n_{\rm y}},  \\
\bm{B}_1 \bm{x} + \bm{B}_2 \bm{y}^{\bm\pi} \geq \bm{d} - \bm{E} \bm{u}^{\bm\pi}, \\
\bm{u}^{\bm{\pi}} \in \mathcal{OU}^{\bm{u}}(\alpha, \bm{\pi}),
\end{cases} \label{eq:cut} \\
& \bm{x} \in \{0, 1\}^{n_{\rm x_1}} \times \mathbb{R}_+^{n_{\rm x_2}}, \label{eq:x_set2}
\end{align}
where $\bm\pi$ indexes the significant scenarios, and $\mathcal{OU}^{\bm{u}}(\alpha, \bm{\pi})$ denotes the projection of the elements in $\mathcal{OU}(\alpha, \bm{\pi})$ onto their $\bm{u}$-space. 
\end{theorem}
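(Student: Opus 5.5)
The plan is to show both inequalities between the optimal value of \textbf{TSRO} in \eqref{eq:tsro_nested-obj} and that of \textbf{TSRO-(single)}. The main tool is the piecewise representation \eqref{eq:Q_x_piecewise} of $Q(\alpha,\bm x)$ established in the proof of Proposition~\ref{prop-2}. Fix $\alpha$ and $\bm x\in\mathcal X$, and read the worst case as an outer maximization over $\bm\pi\in\mathcal P_\Pi$:
\begin{align}
Q(\alpha,\bm x)=\max_{\bm\pi\in\mathcal P_\Pi}\Big[(\bm d-\bm B_1\bm x)^\top\bm\pi+\max_{\bm u\in\mathcal U(\alpha)}(-\bm E\bm u)^\top\bm\pi\Big].
\end{align}
For fixed $\bm\pi$, the inner problem is an LP over the compact polyhedron $\mathcal U(\alpha)$. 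Its optimizers are exactly the $\bm u$-projections of the KKT system defining $\mathcal{OU}(\alpha,\bm\pi)$, because KKT conditions are necessary and sufficient for LPs. So for every $\bm u^{\bm\pi}\in\mathcal{OU}^{\bm u}(\alpha,\bm\pi)$ the inner value equals $(-\bm E\bm u^{\bm\pi})^\top\bm\pi$, independently of which optimizer is chosen.

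Next, I will relate the cut block \eqref{eq:cut} to $W$. For a fixed $\bm\pi$ and a fixed $\bm u^{\bm\pi}$, the block with free $\bm y^{\bm\pi}$ is equivalent to $\eta\ge W(\bm x,\bm u^{\bm\pi})$; Assumption~\ref{assumption:recourse} guarantees $\mathcal Y$ is nonempty and $W$ is finite. By LP duality, $W(\bm x,\bm u)=\max_{\bm\pi'\in\Pi}(\bm d-\bm B_1\bm x-\bm E\bm u)^\top\bm\pi'$. Two consequences follow:
\begin{itemize}
\item $W(\bm x,\bm u^{\bm\pi})\ge(\bm d-\bm B_1\bm x-\bm E\bm u^{\bm\pi})^\top\bm\pi$, taking $\bm\pi'=\bm\pi$;
\item $W(\bm x,\bm u^{\bm\pi})\le Q(\alpha,\bm x)$, since $\bm u^{\bm\pi}\in\mathcal U(\alpha)$.
\end{itemize}

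With these in hand, the two directions go as follows.
\begin{itemize}
\item \emph{Lower bound.} Take any feasible $(\bm x,\eta,\{\bm y^{\bm\pi},\bm u^{\bm\pi}\})$ of \textbf{TSRO-(single)}. For each $\bm\pi$, $\eta\ge W(\bm x,\bm u^{\bm\pi})\ge(\bm d-\bm B_1\bm x)^\top\bm\pi+\max_{\bm u\in\mathcal U(\alpha)}(-\bm E\bm u)^\top\bm\pi$. Maximizing over the finite set $\mathcal P_\Pi$ gives $\eta\ge Q(\alpha,\bm x)$. So the objective is at least $\bm c_1^\top\bm x+Q(\alpha,\bm x)\ge\Lambda^*(\alpha)$.
\item \emph{Upper bound.} Take $\tilde{\bm x}$ attaining $\Lambda^*(\alpha)$; this was shown in the proof of Proposition~\ref{prop-2}. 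Set $\eta=Q(\alpha,\tilde{\bm x})$. For each $\bm\pi$ pick any $\bm u^{\bm\pi}\in\mathcal{OU}^{\bm u}(\alpha,\bm\pi)$, which is nonempty by compactness of $\mathcal U(\alpha)$, and an optimal $\bm y^{\bm\pi}\in\mathcal Y(\tilde{\bm x},\bm u^{\bm\pi})$. Then $\bm c_2^\top\bm y^{\bm\pi}=W(\tilde{\bm x},\bm u^{\bm\pi})\le Q(\alpha,\tilde{\bm x})=\eta$, so this point is feasible with objective $\Lambda^*(\alpha)$.
\end{itemize}
Equality of the optimal values follows, and the optimal $\bm x$ coincide.

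The main obstacle is a subtlety in the lower bound. $\bm u^{\bm\pi}$ is a decision variable that may be any element of $\mathcal{OU}^{\bm u}(\alpha,\bm\pi)$, so one must check that the cut is not weakened by an adversarially chosen optimizer. The argument above handles this because the bound $\eta\ge(\bm d-\bm B_1\bm x-\bm E\bm u^{\bm\pi})^\top\bm\pi$ uses only the inner LP value, which is common to all optimizers. I also need $\mathcal P_\Pi\neq\emptyset$, which holds if $\Pi$ is pointed; this is true since $\Pi\subseteq\mathbb R_+^{m_{\rm y}}$ and $\Pi$ is nonempty because the recourse values are finite. In addition, I need the maximum of the bilinear program to be attained at an extreme point of $\Pi$ even if $\Pi$ is unbounded. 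This follows from finiteness of $W$, via the LP fundamental theorem applied for fixed $\bm u$, exactly as used in \eqref{eq:Q_x_piecewise}.
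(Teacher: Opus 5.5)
Your proof is correct and follows the route the paper indicates just before the theorem: the extreme-point representation \eqref{eq:Q_x_piecewise} of $Q(\alpha,\bm x)$, read as an outer maximization over $\mathcal P_\Pi$, combined with the KKT characterization $\mathcal{OU}(\alpha,\bm\pi)$ of the worst-case $\bm u$ for each fixed $\bm\pi$. The paper never writes out a proof of Theorem~\ref{theo:1}, so your treatment of non-unique optimizers in $\mathcal{OU}^{\bm u}(\alpha,\bm\pi)$, of $\mathcal P_\Pi\neq\emptyset$, and of vertex attainment on the possibly unbounded $\Pi$ supplies exactly what it leaves implicit; one small simplification is that your upper-bound construction works verbatim for every $\bm x\in\mathcal X$ (set $\eta=Q(\alpha,\bm x)$), so you do not need attainment of $\Lambda^*(\alpha)$ from Proposition~\ref{prop-2}, and the correspondence of optimal $\bm x$ then follows directly.
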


TSRO-(single) is more accessible but extremely large scale, since full enumeration of all extreme points in~$\mathcal{P}_\Pi$ to construct the complete constraint sets~\eqref{eq:cut} is \textit{strongly NP-hard}. Fortunately, this exhaustive enumeration can be replaced by a partial one through a decomposition framework.

\begin{corollary} \label{cor:masterproblem}
Let $\hat{\mathcal{P}}_\Pi \subseteq \mathcal{P}_\Pi$. The master problem
\begin{align}
\hspace{-0.6em}\textbf{\textup{MP:}}\,\,\,\min~& \bm{c}^\top_1 \bm{x} + \eta \\
{\rm s.t.}~ & \eqref{eq:x_set1}, \eqref{eq:x_set2},  \\
& \forall \bm{\pi} \in \mathcal{\hat{P}}_{\Pi}: 
\begin{cases} 
\eta - \bm{c}_2^\top \bm{y}^{\bm\pi}\geq 0, \bm{y}^{\bm\pi} \in \mathbb{R}_+^{n_{\rm y}},  \\
\bm{B}_1 \bm{x} + \bm{B}_2 \bm{y}^{\bm\pi} \geq \bm{d} - \bm{E} \bm{u}^{\bm\pi}, \\
\bm{u}^{\bm{\pi}} \in \mathcal{OU}^{\bm{u}}(\alpha, \bm{\pi})
\end{cases} \label{eq:pccg_cut1}
\end{align}
is a relaxation of \textbf{TSRO-(single)}, with its optimal value being smaller than or equal to $\Lambda^*(\alpha)$. 
\end{corollary}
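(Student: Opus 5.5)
The corollary says MP is a relaxation of TSRO-(single) and has optimal value at most $\Lambda^*(\alpha)$. I would prove it by comparing feasible sets directly, using Theorem~\ref{theo:1} as given.

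First I would observe that MP and TSRO-(single) have the same objective $\bm{c}_1^\top\bm{x}+\eta$ and the same first-stage constraints \eqref{eq:x_set1}, \eqref{eq:x_set2}. The only difference is the index set of the scenario blocks: MP uses $\hat{\mathcal{P}}_\Pi$ and TSRO-(single) uses $\mathcal{P}_\Pi$. Each block, for a fixed $\bm\pi$, introduces its own recourse copy $\bm{y}^{\bm\pi}$ and scenario $\bm{u}^{\bm\pi}$. These variables appear nowhere else, so the blocks are separable given $(\bm{x},\eta)$.

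Next, take any feasible point of TSRO-(single), namely $(\bm{x},\eta,\{\bm{y}^{\bm\pi},\bm{u}^{\bm\pi}\}_{\bm\pi\in\mathcal{P}_\Pi})$. Restrict it by discarding the blocks with $\bm\pi\in\mathcal{P}_\Pi\setminus\hat{\mathcal{P}}_\Pi$. Since $\hat{\mathcal{P}}_\Pi\subseteq\mathcal{P}_\Pi$, every remaining block still satisfies the constraints in \eqref{eq:pccg_cut1}. These are identical to those in \eqref{eq:cut}, with the same $\alpha$ and hence the same set $\mathcal{OU}^{\bm u}(\alpha,\bm\pi)$. The restricted point is therefore feasible for MP and has the same objective value.

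It follows that the projection onto $(\bm{x},\eta)$ of the feasible set of TSRO-(single) is contained in that of MP. Since both problems minimize the same function of $(\bm{x},\eta)$, the optimal value of MP is at most that of TSRO-(single). By Theorem~\ref{theo:1}, the latter equals $\Lambda^*(\alpha)$.

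The only delicate point is how to read the statement when values are not attained or are infinite. If TSRO-(single) is infeasible, the inequality holds trivially with value $+\infty$. The argument is stated in terms of infima, so attainment is not needed. Under Assumption~\ref{assumption:recourse}, $\Lambda^*(\alpha)$ is finite and attained, as shown in the proof of Proposition~\ref{prop-2}, so no degenerate case actually arises there.
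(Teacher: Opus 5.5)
Your proof is correct and follows the paper's reasoning: the paper gives no separate proof and treats the corollary as immediate from Theorem~\ref{theo:1}, because MP is just TSRO-(single) with the blocks for $\bm\pi\in\mathcal{P}_\Pi\setminus\hat{\mathcal{P}}_\Pi$ removed. Your restriction argument on feasible points, together with the infimum-based handling of degenerate cases, makes that observation explicit.
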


Therefore, the optimal solution $(\hat{\bm{x}}, \hat{\eta})$ of MP yields the lower bound (LB)
\begin{align}
LB = 
\bm{c}^\top_1 \hat{\bm{x}} + \hat{\eta}
\end{align}
of TSRO-(single). To strengthen the LB, $\hat{\mathcal{P}}_\Pi$ is iteratively augmented using information obtained from solving~$Q(\alpha, \hat{\bm{x}})$. Applying the KKT conditions to its inner LP converts the bilevel formulation of~$Q(\alpha, \hat{\bm{x}})$ into the single-level subproblem
\begin{align}
\hspace{-0.6em}\textbf{SP:}\,\,\,Q(\alpha, \hat{\bm{x}})=&\max~\bm{c}_2^\top \bm{y} \\
\text{s.t.}~ 
& \bm{F}(\alpha) \bm{u} \leq \bm{0}, \\
& \bm{B}_2 \bm{y} + \bm{E} \bm{u} \geq \bm{d} - \bm{B}_1 \hat{\bm{x}}, \\
& \bm{C} - \bm{B}_2^\top \bm{\pi} \geq \bm{0}, \\
& \bm{\pi} \circ \big(\bm{B}_2 \bm{y} + \bm{E} \bm{u} - \bm{d} + \bm{B}_1 \hat{\bm{x}} \big) = \bm{0}, \label{eq:kkt1} \\
&\bm{y} \circ \big( \bm{C} - \bm{B}_2^\top \bm{\pi} \big) = \bm{0},  \label{eq:kkt2} \\
& \bm{u} \in \mathbb{R}_+^{n_{\rm u}},\, \bm{y} \in \mathbb{R}_+^{n_{\rm y}},\, \bm{\pi} \in \mathbb{R}_+^{m_{\rm y}}.
\end{align}

By solving SP, the optimal $\hat{\bm\pi}$ and $\bm{u}^{\hat{\bm{\pi}}} \in \mathcal{OU}^{\bm{u}}(\alpha, \hat{\bm{\pi}})$ are obtained, and $\mathcal{\hat{P}}_{\Pi}$ is updated as $\mathcal{\hat{P}}_{\Pi} \cup \{\hat{\bm\pi}\}$. Subsequently, the new variable $\bm{y}^{\hat{\bm\pi}}$ and constraints \eqref{eq:pccg_cut1} with the obtained $\bm{u}^{\hat{\bm{\pi}}}$, all indexed by $\hat{\bm{\pi}}$, are added to MP as \textit{optimality cut sets}, strengthening the LB. Since $\hat{\bm{x}}$ is a feasible solution to TSRO-(single), the upper bound (UB) can be updated as
\begin{align}
    UB = \text{min}\{UB, \,\,\bm{c}^\top_1 \hat{\bm{x}} +  Q(\alpha, \hat{\bm{x}})\}.
\end{align}

\begin{remark}
The complementarity slackness in \eqref{eq:kkt1}--\eqref{eq:kkt2} can be linearized using a big-$M$ technique~\cite{CCG}, resulting in MIP formulations for both \textbf{MP} and \textbf{SP}, which can be solved directly using commercial solvers.
\end{remark}

The above iterative procedure progressively strengthens the lower and upper bounds through successive iterations, and terminates when the relative gap satisfies $\frac{UB - LB}{LB} \leq \epsilon$.
\vspace{-10pt}
\subsection{F-PC\&CG Algorithm} \label{sec: F-PCCG}
To solve CL-IGDT, the above procedure must be executed repeatedly throughout the $\alpha$-search rounds. In this brute-force implementation, previously generated cutting planes would be discarded in each new round, resulting in many redundant iterations. To address this inefficiency, we observe that although each TSRO instance admits the DIU property, from the perspective of the overall CL-IGDT problem, the uncertainty set exhibits a DDU structure. Recall that in Section~\ref{sec: dual-reformulation}, the optimality cuts are constructed from $\hat{\bm{\pi}}$ within a fixed polyhedron $\Pi$, which remains unchanged across TSRO instances.

Based on this observation, we propose a \textit{cut-recycling strategy} that exploits invariant dual information across TSROs, allowing previously generated optimality cuts to be recycled in subsequent rounds. Specifically, the cuts generated in round $(k-1)$ are reused to initialize round $k$, while being adapted to the updated $\alpha_k$, as
\begin{align}
&\forall \bm{\pi} \in \hat{\mathcal{P}}^{\rm init}_{\Pi}(k):
\begin{cases} 
\eta-\bm{c}_2^\top \bm{y}^{\bm\pi}\geq 0, \bm{y}^{\bm\pi} \in \mathbb{R}_+^{n_{\rm y}},  \\
 \bm{B}_1 \bm{x} + \bm{B}_2 \bm{y}^{\bm\pi} \geq \bm{d} - \bm{E} \bm{u}^{\bm\pi}, \\
\bm{u}^{\bm{\pi}} \in \mathcal{OU}^{\bm{u}}(\alpha_{k}, \bm{\pi}),
\end{cases} \label{eq:recycled_cut1} \\
&\hat{\mathcal{P}}^{\rm init}_{\Pi}(k)=\hat{\mathcal{P}}^{\rm fin}_{\Pi}(k-1), \label{eq:recycled_cut2}
\end{align}
where $\hat{\mathcal{P}}^{\rm init}_{\Pi}(k)$ and $\hat{\mathcal P}_{\Pi}^{\rm fin}(k-1)$ denote the set of initial extreme points of $\Pi$ at round $k$ and those obtained at the end of round $(k-1)$, respectively. With much less effort than repeatedly solving SP, the uncertain parameters $\bm{u}^{\bm{\pi}}$ can either be precomputed or embedded as decision variables.

In this way, all previously generated cutting planes are recycled to warm-start each new $\alpha$-search round. As the number of rounds increases, $\hat{\mathcal{P}}_{\Pi}$ accumulates additional dual information, making cut recycling increasingly beneficial and significantly reducing the total number of iterations. Compared to classical C\&CG, the recycled optimality cuts in \eqref{eq:recycled_cut1}-\eqref{eq:recycled_cut2} involve $\bm{u}$ in parametric form with respect to $\alpha$ and $\bm{\pi}$, giving rise to a "parametric C\&CG" framework. This idea is inspired by the PC\&CG algorithm~\cite{PCCG}, originally developed for DDU-based problems. By embedding this PC\&CG procedure into the outer-loop customized Fibonacci search, we obtain the Fibonacci-Parametric Column-and-Constraint Generation algorithm.

An overview of F-PC\&CG is provided in Fig.\ref{fig:flowchart}. The detailed steps are outlined in \textbf{Algorithm~1} and \textbf{Algorithm~2}, with asymptotic convergence and iteration complexity established in \textbf{Theorem 2}, where $n$ denotes the number of evaluations of $\Lambda^*(\alpha)$, corresponding to $n-2$ iterations of the Fibonacci search.

\begin{algorithm}[htbp]
    \small
    \caption{Customized Fibonacci Search (Outer Loop)}
    \label{alg:combined}

    \begin{algorithmic}[1]
    \State \textbf{Initialization:} Set $\Lambda$, $n\geq2$, $\underline{\alpha}_1 \gets 0$, $\overline{\alpha}_1 \gets 1$, $k \gets 1$.
    \State Compute $I_1 = \overline{\alpha}_1 - \underline{\alpha}_1$, and $I_2$ using \eqref{eq:I_rec}.
    \State Set $\alpha_{L,1} = \overline{\alpha}_1 - I_2$, \quad $\alpha_{R,1} = \underline{\alpha}_1 + I_2$.

    \While{$k \leq n-2$}
        \State \parbox[t]{0.9\linewidth}{
          Solve \textbf{TSRO-(single)} using PC\&CG algorithm to evaluate the unknown 
          $\Lambda^*(\alpha_{L,k})$ and/or $\Lambda^*(\alpha_{R,k})$.
        }
        \State Compute $I_{k+2}$ using \eqref{eq:I_rec}.
        \If{$\Lambda^*(\alpha_{L,k}) \leq \Lambda^*(\alpha_{R,k}) < \Lambda$}
            \State $\underline{\alpha}_{k+1} \gets \alpha_{L,k}$, \,\,\,\,\,\, $\overline{\alpha}_{k+1} \gets \overline{\alpha}_{k}$;
            \State $\alpha_{L,k+1} \gets \alpha_{R,k}$, \, $\alpha_{R,k+1} \gets \underline{\alpha}_{k+1} + I_{k+2}$;
            \State $\Lambda^*(\alpha_{L,k+1}) \gets \Lambda^*(\alpha_{R,k})$.
        \Else
            \State $\underline{\alpha}_{k+1} \gets \underline{\alpha}_k$, \, $\overline{\alpha}_{k+1} \gets \alpha_{R,k}$;
            \State $\alpha_{L,k+1} \gets \overline{\alpha}_{k+1} - I_{k+2}$,  \, $\alpha_{R,k+1} \gets \alpha_{L,k}$;
            \State $\Lambda^*(\alpha_{R,k+1}) \gets \Lambda^*(\alpha_{L,k})$.
        \EndIf
        \State $k \gets k + 1$.
    \EndWhile
    \State \textbf{Output:} Return $\alpha^* \gets \alpha_{L,k}$, and its corresponding $\bm{x}^*$.
    \end{algorithmic}

    \hrule height 0.4pt \kern 1pt \hrule height 0.4pt 
    \vspace{2pt}
    {\normalsize \noindent \textbf{Algorithm 2} Parametric C\&CG (Inner Loop)} 
    \vspace{2pt}
    \hrule height 0.4pt 

    \begin{algorithmic}[1]
    \State \textbf{Initialization:} Given $\alpha$, set $\epsilon > 0$, $LB \gets -\infty$, $UB \gets +\infty$;
    Initialize the recycled optimality cuts using $\{\hat{\mathcal{P}}_\Pi\}$.

    \While{$\frac{UB - LB}{LB} > \epsilon$}
        \State Solve \textbf{MP}.
        \If{\textbf{MP} is infeasible}
            \State Terminate and report infeasibility.
        \Else
            \State Obtain optimal $(\hat{\bm{x}}, \hat{\eta})$, and update $LB \gets \bm{c}^\top \hat{\bm{x}} + \hat\eta$.
        \EndIf
        \State Solve \textbf{SP:} $Q(\alpha, \hat{\bm{x}})$;
        \State Obtain optimal $\hat{\bm{\pi}}$, and update $\hat{\mathcal{P}}_\Pi \gets \hat{\mathcal{P}}_\Pi \cup \{\hat{\bm{\pi}}\}$;
        \State Add corresponding constraints \eqref{eq:pccg_cut1} to \textbf{MP};
        \State Update $UB \gets \text{min}\{UB, \,\,\bm{c}^\top_1 \hat{\bm{x}} +  Q(\alpha, \hat{\bm{x}})\}$.
    \EndWhile
    \State \textbf{Output:} Return $\Lambda^*(\alpha) \gets \bm{c}^\top \hat{\bm{x}} + \hat\eta$ to Outer Loop.
    \end{algorithmic}
\end{algorithm}

\begin{theorem} \label{theo:2}
(1) The F-PC\&CG algorithm asymptotically converges to the optimal value and solution of CL-IGDT. \\
(2) The approximation error of the objective value is bounded by $1/F_n$, the number of iterations is bounded by $|\mathcal{P}_{\Pi}|+n$, and the algorithm is of $\mathcal{O}\big(\binom{m_y+n_y}{n_y}\big)$ iteration complexity.  
\end{theorem}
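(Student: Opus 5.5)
We split the argument into an inner-loop part (Algorithm~2 for fixed $\alpha$) and an outer-loop part (the customized Fibonacci search), and then combine them. The nestedness $\mathcal U(\alpha_1)\subseteq\mathcal U(\alpha_2)$ for $\alpha_1\le\alpha_2$, assumed at the start of Section~\ref{sec: solution_methodology}, makes $\Lambda^*(\alpha)$ nondecreasing. Hence the feasible set of \textbf{CL-IGDT (single)} is an interval $[0,\alpha^*]$ (or $[0,\alpha^*)$). By Proposition~\ref{prop-2}, computing $\alpha^*$ and an optimal $\bm x^*$ of the TSRO at $\alpha^*$ solves CL-IGDT.

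\textbf{Inner loop.} Fix $\alpha$. By Theorem~\ref{theo:1}, TSRO equals TSRO-(single), which is indexed by the finite set $\mathcal P_\Pi$. By Corollary~\ref{cor:masterproblem}, every MP built on $\hat{\mathcal P}_\Pi\subseteq\mathcal P_\Pi$ gives a valid $LB\le\Lambda^*(\alpha)$, and $UB$ is always the value of a feasible $\hat{\bm x}$. This remains true for the recycled sets in \eqref{eq:recycled_cut1}--\eqref{eq:recycled_cut2}, since these are extreme points of the $\alpha$-independent polyhedron $\Pi$ and are re-instantiated through $\mathcal{OU}^{\bm u}(\alpha_k,\bm\pi)$. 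The key step is the standard C\&CG argument. Suppose SP at $\hat{\bm x}$ returns a $\hat{\bm\pi}$ already in $\hat{\mathcal P}_\Pi$. Then the MP constraints for that $\hat{\bm\pi}$ force $\hat\eta\ge W(\hat{\bm x},\bm u^{\hat{\bm\pi}})=Q(\alpha,\hat{\bm x})$, so $LB\ge UB$ and the loop stops. Otherwise a new extreme point is added. Since Assumption~\ref{assumption:recourse} keeps SP bounded, its optimum is attained at an extreme point of $\Pi$, as in \eqref{eq:Q_x_piecewise}. Hence each PC\&CG round ends with the exact $\Lambda^*(\alpha)$. Because extreme points accumulate across rounds and are never discarded, the \emph{total} number of nonterminating inner iterations over the whole algorithm is at most $|\mathcal P_\Pi|$. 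Adding one terminating iteration per evaluation of $\Lambda^*$ gives the bound $|\mathcal P_\Pi|+n$. Finally, an extreme point of $\Pi\subset\mathbb R^{m_y}$, described by $n_y+m_y$ inequalities, is determined by choosing $m_y$ active constraints. Hence $|\mathcal P_\Pi|\le\binom{m_y+n_y}{m_y}=\binom{m_y+n_y}{n_y}$, which yields the stated iteration complexity.

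\textbf{Outer loop.} By induction on $k$, we show the invariant $\alpha^*\in I_k=[\underline\alpha_k,\overline\alpha_k]$. Suppose the right interval is kept, so $\Lambda^*(\alpha_{L,k})\le\Lambda^*(\alpha_{R,k})<\Lambda$. Then $\alpha_{R,k}$ is feasible, so $\alpha^*\ge\alpha_{R,k}\ge\alpha_{L,k}$. Suppose instead the left interval is kept. The case to handle is $\Lambda^*(\alpha_{R,k})\ge\Lambda$; by monotonicity every $\alpha>\alpha_{R,k}$ is then infeasible, so $\alpha^*\le\alpha_{R,k}$. (The degenerate tie $\Lambda^*(\alpha_{L,k})>\Lambda^*(\alpha_{R,k})$ cannot occur under monotonicity.) Together with \eqref{eq:I_rec}--\eqref{eq:I_n}, this gives $|I_n|=I_1/F_n=1/F_n$ with $\alpha^*\in I_n$, and the returned $\alpha_{L,k}\in I_n$. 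Hence $|\alpha_{\rm out}-\alpha^*|\le 1/F_n$. Since the objective of CL-IGDT is $\alpha$ itself, this is the objective error bound. Letting $n\to\infty$ gives $F_n\to\infty$ and $\alpha_{\rm out}\to\alpha^*$, which proves part~(1) for the value.

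\textbf{Main obstacle.} The delicate step is convergence of the \emph{solution} $\bm x$, together with the strict inequality $<\Lambda$ in the selection rule when $\Lambda^*(\alpha^*)=\Lambda$ exactly. We plan to argue using the finiteness of $\mathcal P_{\mathcal U(\alpha)}$ and $\mathcal P_\Pi$ in \eqref{eq:Q_x_piecewise}, assuming the vertices of $\mathcal U(\alpha)$ vary continuously in $\alpha$ through $\underline{\bm u}^\dagger,\overline{\bm u}^\dagger$. Under that assumption $\Lambda^*$ is left-continuous, so the outputs $\alpha_{\rm out}\uparrow\alpha^*$ remain feasible. Compactness of $\mathcal X$ (finitely many binaries, bounded continuous part), together with the continuity of $\bm c_1^\top\bm x+Q(\alpha,\bm x)$ established in the proof of Proposition~\ref{prop-2}, then lets us extract limit points of the returned $\bm x$. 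Any such limit point is optimal for the TSRO at $\alpha^*$, and therefore, by Proposition~\ref{prop-2}, for CL-IGDT. If this continuity of the vertices cannot be justified from the construction of $\mathcal U_{CL}(\alpha)$, we would instead state convergence only for the optimal value $\alpha^*$.
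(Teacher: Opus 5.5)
Your inner-loop counting and the vertex bound are sound, and they match what the Remark after the theorem relies on. Cuts indexed by the $\alpha$-independent set $\mathcal P_\Pi$ are never discarded, so every non-terminating inner iteration of the whole run adds a new vertex, and $|\mathcal P_\Pi|\le\binom{m_y+n_y}{n_y}$.

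One justification is missing there. MP may place $\bm u^{\hat{\bm\pi}}$ anywhere in the optimal face $\mathcal{OU}^{\bm u}(\alpha_k,\hat{\bm\pi})$, not at the SP's $\hat{\bm u}$, so you need $W(\hat{\bm x},\bm u)\ge Q(\alpha_k,\hat{\bm x})$ for every such $\bm u$. This holds because all of them give the same value of $(-\bm E\bm u)^\top\hat{\bm\pi}$ as $\hat{\bm u}$. Weak duality with $\hat{\bm\pi}\in\Pi$ then gives $W(\hat{\bm x},\bm u)\ge(\bm d-\bm B_1\hat{\bm x}-\bm E\bm u)^\top\hat{\bm\pi}=Q(\alpha_k,\hat{\bm x})$. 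The same fact is what makes a recycled cut tight at the new $\alpha_k$. Exactness of $\Lambda^*(\alpha)$ also needs $\epsilon=0$, and the count assumes SP returns vertices of $\Pi$.

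\textbf{First gap: the outer invariant.} In the left branch you only know $\Lambda^*(\alpha_{R,k})\ge\Lambda$. Your claim that every $\alpha>\alpha_{R,k}$ is then infeasible fails when $\Lambda^*\equiv\Lambda$ on $[\alpha_{R,k},\alpha^*]$, and the strict rule of Algorithm~1 then discards $\alpha^*$. Left-continuity does not touch this. The fix is the tie convention of Appendix~\ref{appen-2}: keep $I_R$ whenever $\Lambda^*(\alpha_{R,k})\le\Lambda$. Monotonicity alone then gives $\alpha^*\ge\alpha_{R,k}$ in the right branch, and $\Lambda^*(\alpha_{R,k})>\Lambda\Rightarrow\alpha^*\le\alpha_{R,k}$ in the left branch. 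A minor indexing point: after $n-2$ passes the bracket has length $I_{n-1}$, which is $2I_n$ under \eqref{eq:fib_def}. The $1/F_n$ bound comes from the returned $\alpha_{L,n-1}$ being its midpoint.

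\textbf{Second gap: the solution part of (1).} Your premise that $\alpha_{\rm out}\uparrow\alpha^*$ with feasible outputs is false. Algorithm~1 returns $\alpha_{L,k}$, which can lie above $\alpha^*$; the reported run returns $0.691$ with $\Lambda^*(0.691)=260{,}472.86>\Lambda$. So the returned $\bm x$ need not be CL-IGDT-feasible, and left-continuity says nothing about it.

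Your continuity assumption on $\underline{\bm u}^\dagger,\overline{\bm u}^\dagger$ actually yields two-sided Hausdorff continuity of $\alpha\mapsto\mathcal U(\alpha)$, so use it that way. Under Assumption~\ref{assumption:recourse}, $W(\bm x,\bm u)=\max_{\bm\pi\in\mathcal P_\Pi}(\bm d-\bm B_1\bm x-\bm E\bm u)^\top\bm\pi$ is jointly continuous. By Berge's theorem $Q$ and $\Lambda^*$ are then continuous. Limit points of the TSRO minimizers at $\alpha_{\rm out}\to\alpha^*$ are TSRO minimizers at $\alpha^*$, hence CL-IGDT-optimal by Proposition~\ref{prop-2}.

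Alternatively, track the lower endpoint $\underline\alpha_k$ together with its TSRO solution $\bm x_k$; this endpoint is always $0$ or an evaluated point with $\Lambda^*<\Lambda$. Nestedness gives $\bm c_1^\top\bm x_k+Q(\alpha,\bm x_k)\le\Lambda$ for all $\alpha\le\underline\alpha_k$. Continuity of $Q(\alpha,\cdot)$ passes this to any limit point for every $\alpha<\alpha^*$. Reaching $\alpha^*$ itself then needs only $\mathcal U(\alpha^*)\subseteq\mathrm{cl}\bigcup_{\alpha<\alpha^*}\mathcal U(\alpha)$. Feasibility at $\alpha^*$ suffices here, since the CL-IGDT objective is $\alpha$.

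Some such regularity is unavoidable, because the argmin construction of $\mathcal U_{CL}(\alpha)$ need not vary continuously. Without it, only the value part of (1) follows, as you suspected.
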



\begin{remark}
F-PC\&CG features a structural decoupling: its total iteration count is bounded by $n + |\mathcal{P}_\Pi|$, rather than $n|\mathcal{P}_\Pi|$ without introducing the cut recycling strategy. Since $|\mathcal{P}_\Pi|$ dominates the iteration complexity, the outer loop no longer drives overall iteration growth, thus significantly improving computational efficiency.
\end{remark}

\section{Case Study and Discussion} \label{Case Study}
The case study is conducted on a modified IEEE-33 bus system, as illustrated in Fig.~\ref{fig:IEEE33}. The main system parameters are available in~\cite{33_bus}, while DG and ESS data are given in Table~\ref{tab:dg_data} and Table~\ref{tab:ESSs_data}. The line capacity is set to 3.5~kW, and the voltage magnitude is constrained within [0.95, 1.05]. The time step is one hour over a 24-hour horizon. Approximately one year of historical data is used to construct the generalized uncertainty sets with $\Gamma=0.8$. Since $\alpha \in [0,1]$ and considering interval resolution, the search interval $I_8=0.048$ is adopted, which is sufficiently accurate. The PC\&CG tolerance is set to 0.5\%. All simulations are performed on a 64-bit PC (3.00~GHz CPU, 16~GB RAM) using the Gurobi API in PyCharm.
\begin{figure}[!t]
\centering
\includegraphics[width=3.0in]{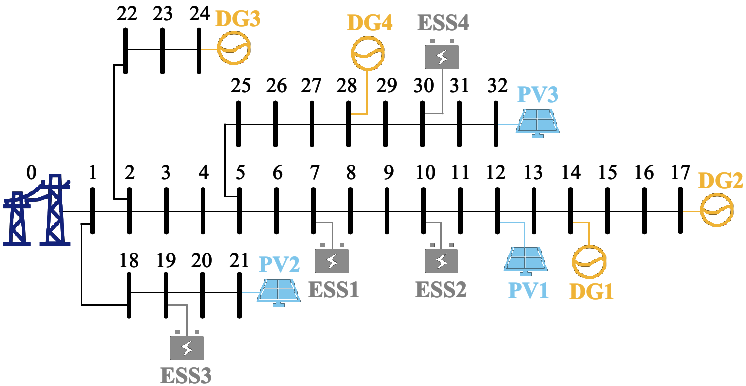}
\caption{Modified IEEE 33-bus system.}
\label{fig:IEEE33}
\end{figure}
\begin{table}[t]
	\centering
	\caption{DGs Information} 
	\vspace{-2mm}
	\scalebox{0.7}{
	\begin{tabular}{cccccccc}
		\toprule
		Label & $\overline{p}_{i}^{G}$(kW) & $\underline{p}_{i}^{G}$(kW) & $RU_{i}^G/RD_{i}^G$(kW) & $\overline{q}_{i}^{G}$(kW) & $\underline{q}_{i}^{G}$(kW) & $c_{i}^G$(\$) & $c_{i}^U$(\$)\\
		\midrule
		DG1 & 400 & 50 & 100 & 320 & -200 & 2.0 & 200 \\
		
		DG2 & 500 & 80 & 150 & 400 & -350 & 2.5 & 400 \\
		
		DG3 & 800 & 100 & 200 & 700 & -500 & 3.5 & 700 \\
  		
		DG4 & 1200 & 120 & 250 & 1100 & -800 & 4.0 & 1000 \\
		\bottomrule
	\end{tabular}}
	\label{tab:dg_data}
\vspace{-0.2em}    
\end{table}
\begin{table}[t]
	\centering
	\caption{ESSs Information} \vspace{-2mm}
	\scalebox{0.7}{
	\begin{tabular}{cccccc}
		\toprule
		Label & $\overline{p}_{i}^{B+}/\overline{p}_{i}^{B-}$(kW) & $\overline{e}_{i}$(kW) & $\underline{e}_{i}$(kW) & $\eta_{i}^{B}$  \\
		\midrule
		ESS1 & 120 & 700 & 140 & 0.9 \\
		
		ESS2 & 200 & 800 & 160 & 0.9 \\
		
		ESS3 & 250 & 900 & 180 & 0.9 \\
  		
		ESS4 & 280 & 1000 & 200 & 0.9 \\
		\bottomrule
	\end{tabular}}
	\label{tab:ESSs_data}
\end{table}
\vspace{-10pt}
\subsection{Optimal Operation Results}
The financial budget of the proposed CL-IGDT framework is specified as \$\num{259635.38}, i.e., 25\% above the optimal cost under the expected scenarios. The solution yields $\alpha^*=0.691$, with $\Lambda^*(\alpha^*)=\num{260472.86}$ deviating from the preset budget by 0.32\%. In comparison, the IGDT-based approach with the same budget yields an optimal uncertainty horizon $\delta^*=0.184$, representing the symmetric uncertainty modeling discussed in Section~\ref{sec:intro}. 

The comparison is shown in Fig.~\ref{fig:Uncertainty_Uset}. From the confidence bands of PV generation and load demand over 24 hours, $\mathcal{U}(\alpha^*)$ can be either symmetric or asymmetric around the predictive value. For PV generation, $\mathcal{U}(\alpha^*)$ is consistently larger than the IGDT-based set, whereas for load demand its size varies across time slots (e.g., 10h vs. 20h). This is because the probability-based set captures distributional features from historical data. As a result, although governed by a single variable, $\mathcal{U}(\alpha^*)$ remains source- and time-specific, enabling a finer characterization of uncertainty in DNs. By contrast, the IGDT-based set is always symmetric and centered on the predictive value, with its range controlled by a single~$\delta$. For instance, its range is narrow at 6h but wide at 20h, consistent with the predictive value level.

The embedded histograms at 11h provide further insight. For PV, nearly 45\% of historical samples concentrate at zero. With $\alpha^*=0.691$, $\mathcal{U}(\alpha^*)$ yields the shortest confidence interval covering around 69.1\% of the most probable scenarios, resulting in a non-trivial uncertainty set $[0,35]$. By contrast, the IGDT-based set degenerates to zero width because the predictive value is zero. Likewise, for load demand, $\mathcal{U}(\alpha^*)$ yields the interval $[92,178]$ covering the most probable scenarios, whereas the IGDT-based set remains centered on the predictive value and excludes some more likely realizations.

These results show that the CL-IGDT framework allocates the financial budget to cover the most significant scenarios while discarding trivial ones according to their probability of occurrence, thereby achieving an improved trade-off between economic performance and robustness in DN operations.
\begin{figure}[!t]
\centering
\subfigure[\label{fig:PV_Uset}]{\includegraphics[width=0.32\textwidth]{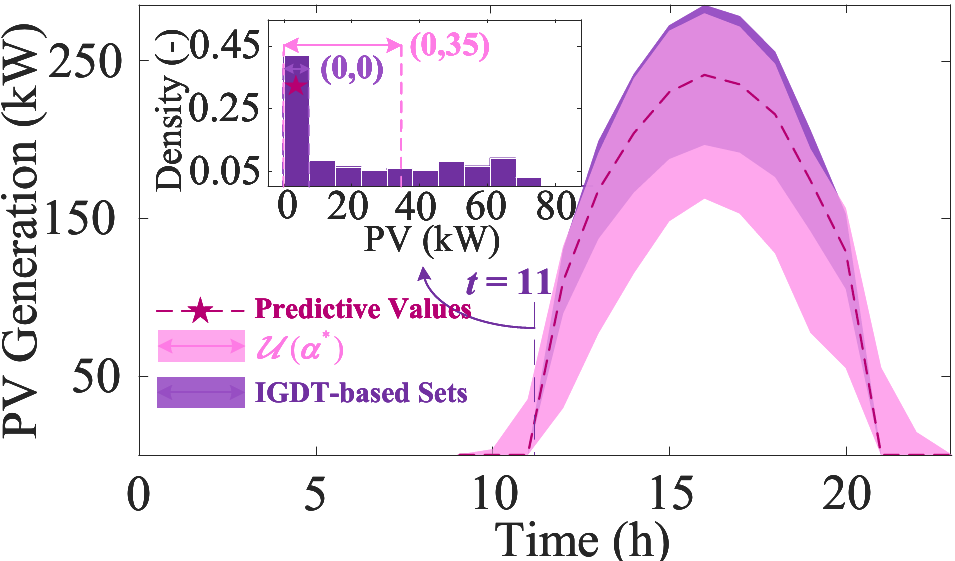}}
\subfigure[\label{fig:Load_Uset}]{\includegraphics[width=0.32\textwidth]{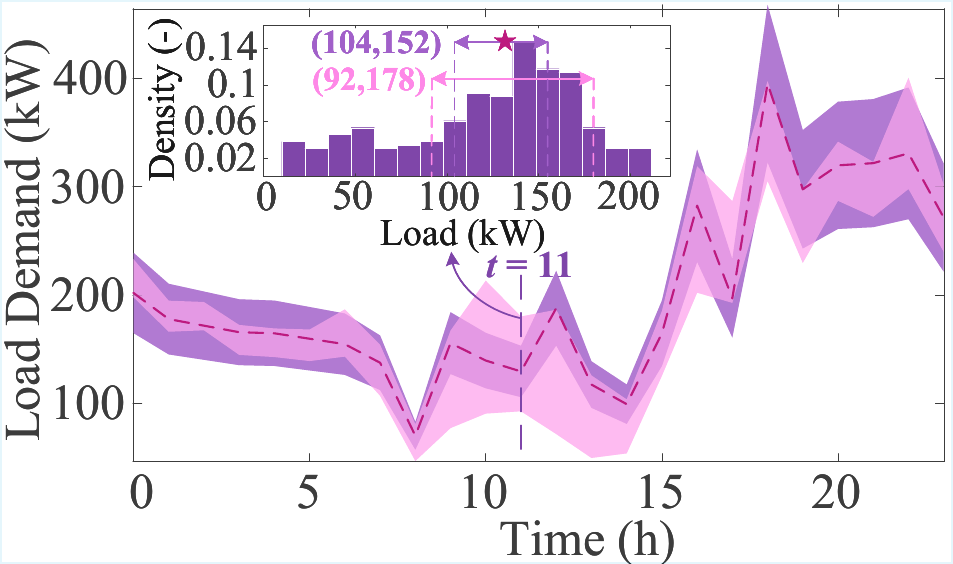}}
\caption{Optimal uncertainty sets: (a) PV generation at bus 12; (b) load demand at bus 7.}
\label{fig:Uncertainty_Uset}
\vspace{-0.3em}
\end{figure}
\vspace{-10pt}
\subsection{Out-of-Sample Performance}
\subsubsection{Comparison with Different Approaches} To evaluate the out-of-sample performance of the first-stage decisions~$\bm{x}^*$ derived from the proposed framework, two baseline methods are adopted for comparison: an IGDT-based robust optimization and a two-stage stochastic programming (TSSP) framework constructed via the sample average approximation (SAA)~\cite{saa}. To ensure a fair comparison, the random scenarios used in TSSP are sampled from the same uncertainty boundary defined by the optimized~$\mathcal{U}(\alpha^*)$. All three approaches are tested against the same set of 100 out-of-sample scenarios.
\begin{table}[t]
    \centering
    \caption{Out-of-sample Performance of Different Approaches}
    \vspace{-2mm}
    \scalebox{0.75}{
    \begin{threeparttable}
    \begin{tabular}{ccccc}
        \toprule
        Approach & $C_{\text{I}}(\$)$ & $\mathbb{E}[C_{\text{II}}] \pm \mathrm{Std}(\$)$ & $C_{\text{TOL}}(\$)$ & $\text{NoP}$ \\
        \midrule
        CL-IGDT  & 223{,}116.97 & 10{,}734.62\,±\,9{,}102.99 & 233{,}851.59 & 0 \\
        IGDT      & 218{,}217.04 & 26{,}564.42\,±\,17{,}766.33 & 244{,}781.46 & 12 \\
        TSSP      & 213{,}379.55 & 20{,}689.66\,±\,9{,}537.71 & 234{,}069.22 & 6 \\
        \bottomrule
    \end{tabular}
    \begin{tablenotes}
        \footnotesize
        \item Note: NoP denotes the number of penalty actions.
    \end{tablenotes}
    \end{threeparttable}
    }
    \label{tab:out_of_sample}
    \vspace{4pt}
    \vspace{-1em}
\end{table}

As reported in Table~\ref{tab:out_of_sample}, CL-IGDT yields the highest first-stage cost~$C_{\text{I}}$ (\$\num{223116.97}) compared to IGDT (\$\num{218217.04}) and TSSP (\$\num{213379.55}). However, this upfront investment pays off in the out-of-sample test: the average second-stage cost~$\mathbb{E}[C_{\text{II}}]$ under CL-IGDT is only \$\num{10734.62}, representing a reduction of \num{59.59}\% and \num{48.12}\% compared to IGDT and TSSP, respectively. Moreover, CL-IGDT achieves the lowest standard deviation ($\mathrm{Std}$) of $C_{\text{II}}$ (\$\num{9102.99}) and completely avoids penalty actions, whereas IGDT and TSSP result in 12 and 6 occurrences, respectively. This shows that CL-IGDT better protects the DNs against highly probable uncertain realizations by leveraging the distributional characteristics embedded in the generalized uncertainty set. In contrast, the uncertainty modeling in IGDT is not data-informed, so although its first-stage cost is lower, it provides insufficient protection against high-probability deviations from expectation.

TSSP also performs well because its scenario sampling is conducted over the optimized set~$\mathcal{U}(\alpha^*)$, indirectly benefiting from the modeling strength of CL-IGDT. However, unlike the proposed framework, which considers all scenarios within~$\mathcal{U}(\alpha^*)$, TSSP only accounts for a finite subset. As a result, it still misses some uncertain cases, leading to six penalty actions and a higher average second-stage cost. In summary, the proposed CL-IGDT framework shows clear advantages over traditional robust and stochastic optimization in both economic efficiency and operational reliability. It also provides guidance for uncertainty boundary selection in robust optimization.
\begin{table}[t]
    \centering
    \caption{Out-of-sample Performance under Varying $\Gamma$}
    \vspace{-2mm}
    \scalebox{0.75}{
    \begin{tabular}{cccc}
        \toprule
        $\Gamma$ & $C_{\text{I}}(\$)$ & $\mathbb{E}[C_{\text{II}}] \pm \mathrm{Std}(\$)$ & $C_{\text{TOL}}(\$)$ \\
        \midrule
        0.5 & 220{,}285.86 & 15{,}244.90\,±\,9{,}341.45 & 235{,}530.76 \\
        0.6 & 222{,}934.56 & 6{,}679.60\,±\,9{,}167.32 & 234{,}614.16 \\
        0.7 & 231{,}929.92 & 2{,}769.56\,±\,9{,}209.03 & 234{,}699.49 \\
        0.8 & 223{,}116.97 & 10{,}734.62\,±\,9{,}102.99 & 233{,}851.59 \\
        0.9 & 222{,}968.17 & 10{,}858.46\,±\,9{,}104.22 & 233{,}987.28 \\
        1.0 & 222{,}130.49 & 12{,}664.06\,±\,9{,}112.42 & 234{,}794.55 \\
        \bottomrule
    \end{tabular}}
    \label{tab:gamma_performance}
    \vspace{4pt}
\end{table}

\subsubsection{Sensitivity Analysis of~$\Gamma$} To demonstrate the enhanced expressiveness of the proposed generalized uncertainty set and how the resulting operational scheme $\bm{x}^*$ responds to different structural forms of $\mathcal{U}(\alpha^*)$, a sensitivity analysis on the structural parameter $\Gamma$ is conducted. As illustrated in Fig.~\ref{fig:generalized Uset}, the structure of $\mathcal{U}(\alpha)$ evolves to allow more temporally decoupled realizations and wider uncertainty coverage.

As shown in Table~\ref{tab:gamma_performance}, this structural shift directly affects the first-stage cost $C_{\text{I}}$, which first increases and then decreases. Specifically, $C_{\text{I}}$ increases from \$\num{220285.86} at $\Gamma = \num{0.5}$ to a peak of \$\num{231929.92} at $\Gamma = \num{0.7}$, before decreasing to \$\num{222130.49} at $\Gamma = \num{1.0}$. Initially, as temporal coupling weakens, $\bm{x}^*$ becomes more conservative due to increased redundant resource allocation across time slots, resulting in a higher $C_{\text{I}}$. However, when $\Gamma$ becomes too large, the model starts prioritizing the reduction of $C_{\text{I}}$ to control total operational cost. In contrast, the out-of-sample performance of $\bm{x}^*$, reflected by $\mathbb{E}[C_{\text{II}}]$, exhibits the opposite trend: it first decreases and then increases. As shown in Table~\ref{tab:gamma_performance}, $\mathbb{E}[C_{\text{II}}]$ declines from \$\num{15244.90} at $\Gamma = \num{0.5}$ to a minimum of \$\num{2769.56} at $\Gamma = \num{0.7}$, and then increases to \$\num{12664.06} at $\Gamma = \num{1.0}$. This is because the increasing range of $\mathcal{U}(\alpha)$ improves the robustness of $\bm{x}^*$ and its adaptability to out-of-sample scenarios. However, when $\Gamma$ becomes too large, the decoupled uncertainty modeling may deviate from the temporal correlation patterns implicit in the samples, leading to a misaligned $\bm{x}^*$.

As a result, the total cost $C_{\text{TOL}}$ first decreases and then increases as $\Gamma$ varies, reaching its minimum value of \num{233851.59} at $\Gamma = \num{0.8}$. However, there is no universally optimal $\Gamma$, as its choice reflects the decision-maker’s preference regarding temporal correlation. This result highlights how $\mathcal{U}(\alpha)$ enables a continuous structural transition from time-coupled worst-case realizations to fully decoupled ones. By tuning $\Gamma$, the decision-maker can shift between aggressive and conservative strategies, thereby adapting to diverse uncertainty characteristics in practice.
\vspace{-22pt}
\subsection{Computational Performance}
To demonstrate the superiority of the proposed F-PC\&CG, the baseline F-C\&CG algorithm is adopted for comparison. For F-C\&CG, the outer $\alpha$-search follows the same Fibonacci procedure, while the inner evaluation of $\Lambda^*(\alpha)$ is carried out by the C\&CG algorithm. Similar to PC\&CG, once a significant scenario $\hat{\bm{u}} \in \mathcal{P}_{\mathcal{U}(\alpha)}$ is obtained from solving SP, a new variable $\bm{y}^{\hat{\bm{u}}}$ is introduced and constraints~\eqref{eq:ccg_cut} are added to MP as optimality cuts, strengthening its LB. 
\begin{align}
&\forall \hat{\bm{u}} \in \hat{\mathcal{P}}_{\mathcal{U}(\alpha)}:
\begin{cases} 
\eta - \bm{c}_2^\top \bm{y}^{\hat{\bm{u}}}\geq 0, \bm{y}^{\hat{\bm{u}}} \in \mathbb{R}_+^{n_{\rm y}},  \\
\bm{B}_1 \bm{x} + \bm{B}_2 \bm{y}^{\hat{\bm{u}}} \geq \bm{d} - \bm{E} \hat{\bm{u}},
\end{cases} \label{eq:ccg_cut}
\end{align}
where the superscript $\hat{\bm{u}}$ indexes the significant scenarios, and $\hat{\mathcal{P}}_{\mathcal{U}(\alpha)}$ represents the subset of extreme points of $\mathcal{U}(\alpha)$. 

\begin{table}[t]
	\centering
	\caption{Computational Results of F-PC\&CG vs. F-C\&CG}
    \vspace{-2mm}
	\scalebox{0.75}{
	\begin{threeparttable}
	\begin{tabular}{ccc cc cc}
		\toprule
        \multirow{2}{*}{\raisebox{-0.7\height}{\makecell{Outer\\Iteration}}}
        & \multirow{2}{*}{\raisebox{-0.7\height}{$\alpha$}} 
        & \multirow{2}{*}{\raisebox{-0.7\height}{\makecell{Interval\\Selection}}} 
		& \multicolumn{2}{c}{PC\&CG} 
		& \multicolumn{2}{c}{C\&CG} \\
		\cmidrule(lr){4-5} \cmidrule(lr){6-7}
		 &  &  & \multicolumn{1}{c}{$\Lambda^*(\alpha)$} & \multicolumn{1}{c}{Gap} & \multicolumn{1}{c}{$\Lambda^*(\alpha)$} & \multicolumn{1}{c}{Gap} \\
		\cmidrule(r){1-3} \cmidrule(lr){4-5} \cmidrule(l){6-7}
		\multirow{2}{*}{1} & 0.619 & \multirow{2}{*}{$I_{R}$} & 248{,}741.12 & 0.05\% & 249{,}556.11 & 0.08\% \\
		  & 0.381 &  & 231{,}243.92 & 0.43\% & 231{,}680.59 & 0.38\% \\
		2 & 0.762 & $I_{L}$ & 270{,}904.41 & 0.49\% & 270{,}824.86 & 0.45\% \\
		Pruned & 0.524 & $I_{R}$ & -- & -- & -- & -- \\
		3 & 0.667 & $I_{R}$ & 257{,}411.52 & 0.33\% & 257{,}337.41 & 0.31\% \\
		4 & 0.714 & $I_{L}$ & 264{,}151.90 & 0.21\% & 263{,}532.21 & 0.37\% \\
		Pruned & 0.666 & $I_{R}$ & -- & -- & -- & -- \\
		5 & 0.691 & -- & 260{,}472.86 & 0.07\% & 260{,}639.67 & 0.14\% \\
		\bottomrule
	\end{tabular}
	\begin{tablenotes}
		\footnotesize
		\item Note: The expected scenarios have been employed to generate the initial optimality cuts for both algorithms, so as to strengthen the initial LB.
	\end{tablenotes}
	\end{threeparttable}
	}
	\label{tab:F-PCCG}
    \vspace{-0.5em}
\end{table}
Compared to~\eqref{eq:cut}, the scenario $\hat{\bm{u}}$ in the baseline algorithm is a fixed value rather than a parametric expression. As illustrated in Fig.~\ref{fig:generalized Uset}, the extreme points of $\mathcal{U}(\alpha)$ vary substantially due to its DDU property. Specifically, an extreme point derived from $\mathcal{U}(\alpha_1)$ is generally not an extreme point of $\mathcal{U}(\alpha_2)$ when $\alpha_1 \neq \alpha_2$, and may not even belong to $\mathcal{U}(\alpha_2)$. Consequently, the C\&CG procedure must be re-executed from scratch for each $\alpha$ within the same problem instance, resulting in considerable redundant computation. This directly explains the computational results reported in Table~\ref{tab:F-PCCG} and the convergence trends in Fig.~\ref{fig:convergence}.

\begin{figure}[!t]
\centering
\includegraphics[width=3.0in]{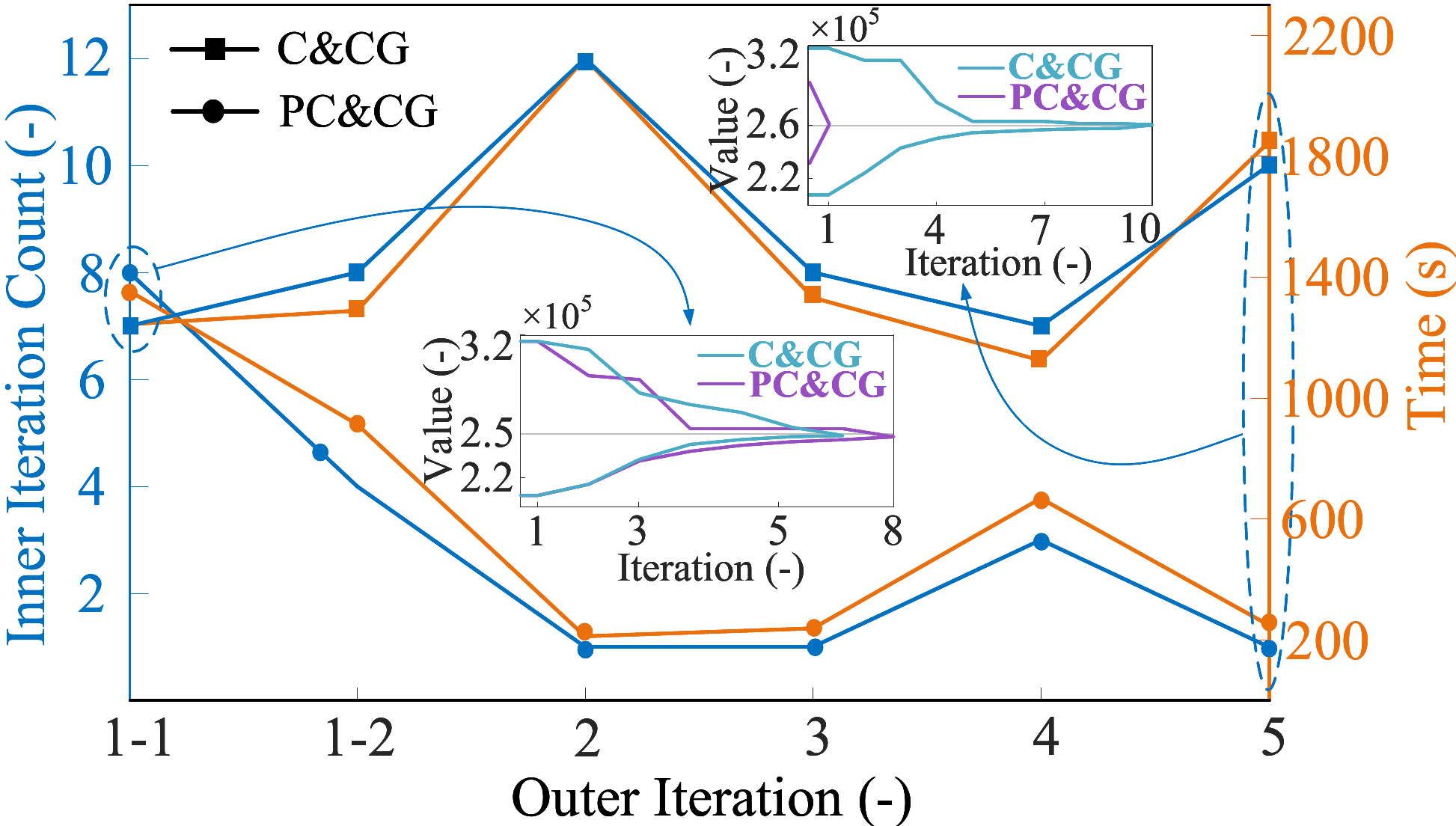}
\caption{Convergence behaviors of PC\&CG and C\&CG. At the first Fibonacci round, two function evaluations are required, which are denoted as 1-1 and 1-2.}
\label{fig:convergence}
\end{figure}

The computational results of solving the CL-IGDT problem are detailed in Table~\ref{tab:F-PCCG}. In total, six subinterval selections are made, generating eight candidate values of~$\alpha$, of which six are explicitly evaluated and two are pruned. Given the resolution $I_8=0.048$, the maximum deviation between the obtained $\alpha^*=0.691$ and the exact optimum is at most 0.048. Across all outer Fibonacci rounds, both algorithms converge to nearly the same value, with relative gaps consistently within the prescribed tolerance.

Fig.~\ref{fig:convergence} illustrates the convergence behaviors of PC\&CG and C\&CG across outer iterations. At the first function evaluation, both algorithms require a similar number of inner iterations and comparable runtime. From the second evaluation onward, however, the inner iterations required for PC\&CG drop sharply because all cuts generated in previous rounds are recycled, enabling a direct warm-start and yielding a much stronger initial LB. In particular, at the second, third, and fifth outer iterations, only a single inner iteration is needed. By contrast, in C\&CG each outer iteration is equivalent to solving a new problem, so the inner iteration counts remain consistently higher.

In terms of computational time, the average runtime per inner iteration is comparable for both algorithms. Nevertheless, the total iteration count for solving the CL-IGDT problem is $18$ for F-PC\&CG, compared to $52$ for F-C\&CG. This reduction translates into a substantial decrease in overall runtime (a total reduction of $5{,}326\,\mathrm{s}$), highlighting the clear computational advantage of the proposed algorithm. If more Fibonacci search rounds were required, the superiority of F-PC\&CG would become even more pronounced. In practical multi-day DN operation, the generated cuts can be further recycled to enhance computational efficiency.
\vspace{-20pt}
\section{Conclusions}
This paper proposed a generalized uncertainty set within the CL-IGDT framework for the robust operation of distribution networks. This formulation not only enables a more flexible characterization of RESs and load uncertainties, but also enhances the modeling capability of IGDT-based methods, thereby supporting more tailored operational strategies. By uncovering the theoretical equivalence between CL-IGDT and TSROs, and recognizing the DDU property of the proposed set, we introduced a cut-recycling strategy and developed a novel F-PC\&CG algorithm that guarantees asymptotic convergence with bounded approximation error. Case studies have validated the effectiveness and advantages of the proposed framework and algorithm.
\vspace{-10pt}
\appendices
\section{Illustration of the Interval Selection Rule} \label{appen-2}
The representative examples in Fig.~\ref{fig:interval_selection} demonstrate how the revised interval selection rule operates. Apart from this, all other procedures of Fibonacci search remain unchanged. Case~1 shows that $\alpha^*$ lies in the right subinterval $I_R = [\alpha_L,\, \overline{\alpha}]$ when $\Lambda^*(\alpha_L) \leq \Lambda^*(\alpha_R) < \Lambda$. In case~2, either subinterval may be used when $\Lambda^*(\alpha_L) < \Lambda < \Lambda^*(\alpha_R)$. In case~3, the left subinterval $I_L = [\underline{\alpha},\, \alpha_R]$ is selected when $ \Lambda < \Lambda^*(\alpha_L) \leq \Lambda^*(\alpha_R)$. As a special case, if a flat segment may occur, i.e., whenever $\Lambda^*(\alpha_L)=\Lambda$ and/or $\Lambda^*(\alpha_R)=\Lambda$, we retain $I_R$ to ensure the largest feasible $\alpha^*=\max\{\alpha:\Lambda^*(\alpha)\le\Lambda\}$ remains in the selected interval.

Furthermore, exploiting the monotonicity of $\Lambda^*(\alpha)$ yields a pruning criterion that allows one to skip a function evaluation and directly select the next subinterval. Specifically, in the $k$th Fibonacci search, given $\Lambda^*(\alpha_{R,k}) < \Lambda$, the unevaluated $\Lambda(\alpha_{L,k})$ satisfies $\Lambda^*(\alpha_{L,k}) \leq \Lambda^*(\alpha_{R,k}) < \Lambda$, and $I_{R,k+1}$ can be selected without additional evaluation. Similarly, given $\Lambda < \Lambda^*(\alpha_{L,k})$, $I_{L,k+1}$ can be selected without additional evaluation.

\begin{figure}[!t]
\centering
\includegraphics[width=3.0in]{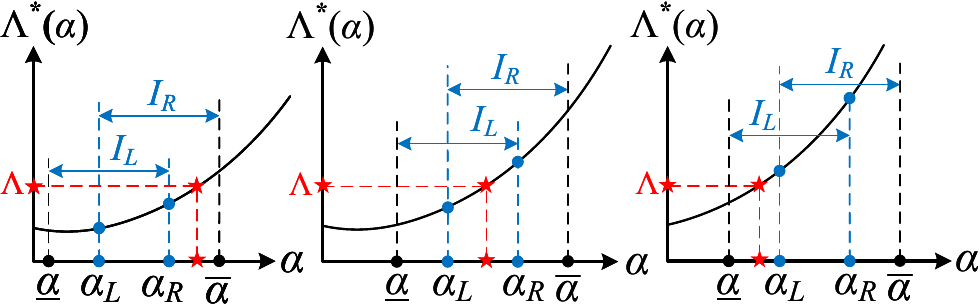}
\caption{Three representative examples of search interval reduction.}
\label{fig:interval_selection}
\end{figure}

\ifCLASSOPTIONcaptionsoff
  \newpage
\fi
\vspace{-6pt}
\bibliographystyle{IEEEtran}
\bibliography{reference}

@article{zhishengxiong,
      title={Two-Stage Robust Optimal Operation of Distribution Networks Considering Renewable Energy and Demand Asymmetric Uncertainties}, 
      author={Zhisheng Xiong and Bo Zeng and Peter Palensky and Pedro P. Vergara},
      journal={arXiv preprint arXiv:2411.10166},
      year={2025}
}

@article{PCCG,
  title={Two-stage robust optimization with decision dependent uncertainty},
  author={Zeng, Bo and Wang, Wei},
  journal={arXiv preprint arXiv:2203.16484},
  year={2022}
}

@ARTICLE{yangyue,
  author={Yang, Yue and Wu, Wenchuan},
  journal={IEEE Trans. Smart Grid}, 
  title={A Distributionally Robust Optimization Model for Real-Time Power Dispatch in Distribution Networks}, 
  year={2019},
  volume={10},
  number={4},
  pages={3743-3752},
  doi={10.1109/TSG.2018.2834564}}

@ARTICLE{active_power_ESS,
  author={Vergara, Pedro P. and López, Juan C. and Rider, Marcos J. and da Silva, Luiz C. P.},
  journal={IEEE Trans. Smart Grid}, 
  title={Optimal Operation of Unbalanced Three-Phase Islanded Droop-Based Microgrids}, 
  year={2019},
  volume={10},
  number={1},
  pages={928-940}}

@ARTICLE{lossy_DistFlow,
  author={Schweitzer, Eran and Saha, Shammya and Scaglione, Anna and Johnson, Nathan G. and Arnold, Daniel},
  journal={IEEE Trans. Power Syst.}, 
  title={Lossy DistFlow Formulation for Single and Multiphase Radial Feeders}, 
  year={2020},
  volume={35},
  number={3},
  pages={1758-1768},
  doi={10.1109/TPWRS.2019.2954453}}

@book{fibonacci_search,
  title={Practical Optimization: Algorithms and Engineering Applications},
  author={Antoniou, A. and Lu, W.S.},
  isbn={9780387711072},
  lccn={2007922511},
  year={2007},
  publisher={Springer US}
}

@ARTICLE{idgt_binary2,
  author={Dong, Yingchao and Li, Zhengmao and Zhang, Hongli and Wang, Cong and Zhou, Xiaojun},
  journal={IEEE Trans. Smart Grid}, 
  title={Robust Coordinated Planning of Multi-Region Integrated Energy Systems With Categorized Demand Response}, 
  year={2024},
  volume={15},
  number={6},
  pages={5678-5692}}

@ARTICLE{idgt_binary1,
  author={Lu, Shuai and Lou, Guannan and Gu, Wei and Zhang, Suhan and Yuan, Xiaodong and Li, Qiang and Han, Huachun},
  journal={CSEE J. Power Energy Syst.}, 
  title={Optimal Dispatch for Flexible Uncertainty Sets in Multi-Energy Systems: An IGDT Based Two-Stage Decision Framework}, 
  year={2023},
  volume={9},
  number={6},
  pages={2374-2385},
  doi={10.17775/CSEEJPES.2020.07150}}

@book{RO,
  title={Robust and Adaptive Optimization},
  author={Bertsimas, D. and den Hertog, D.},
  isbn={9781733788526},
  year={2022},
  publisher={Dynamic Ideas LLC}
}

@article{CCG,
title = {Solving two-stage robust optimization problems using a column-and-constraint generation method},
journal = {Oper. Res. Lett.},
volume = {41},
number = {5},
pages = {457-461},
year = {2013},
issn = {0167-6377},
author = {Bo Zeng and Long Zhao}}

@misc{carbon_neutrality,
  author       = {{International Energy Agency (IEA)}},
  title        = {Net Zero by 2050: A Roadmap for the Global Energy Sector},
  year         = {2021},
  howpublished = {\url{https://www.iea.org/reports/net-zero-by-2050}},
  institution  = {International Energy Agency (IEA)}
}

@article{sp1,
title = {Resilience enhancement of distribution network under typhoon disaster based on two-stage stochastic programming},
journal = {Appl. Energy},
volume = {338},
pages = {120892},
year = {2023},
issn = {0306-2619},
author = {Hui Hou and Junyi Tang and Zhiwei Zhang and Zhuo Wang and Ruizeng Wei and Lei Wang and Huan He and Xixiu Wu}
}

@ARTICLE{sp2,
  author={Nazir, Firdous Ul and Pal, Bikash C. and Jabr, Rabih A.},
  journal={IEEE Trans. Power Syst.}, 
  title={A Two-Stage Chance Constrained Volt/Var Control Scheme for Active Distribution Networks With Nodal Power Uncertainties}, 
  year={2019},
  volume={34},
  number={1},
  pages={314-325}}

@ARTICLE{ro1,
  author={Ding, Tao and Liu, Shiyu and Yuan, Wei and Bie, Zhaohong and Zeng, Bo},
  journal={IEEE Trans. Sustain. Energy}, 
  title={A Two-Stage Robust Reactive Power Optimization Considering Uncertain Wind Power Integration in Active Distribution Networks}, 
  year={2016},
  volume={7},
  number={1},
  pages={301-311}}

@ARTICLE{ro2,
  author={Zhang, Cuo and Xu, Yan and Dong, Zhaoyang and Ravishankar, Jayashri},
  journal={IEEE Trans. Smart Grid}, 
  title={Three-Stage Robust Inverter-Based Voltage/Var Control for Distribution Networks With High-Level PV}, 
  year={2019},
  volume={10},
  number={1},
  pages={782-793}}

@ARTICLE{dro1,
  author={Wang, Siyuan and Zhao, Chaoyue and Fan, Lei and Bo, Rui},
  journal={IEEE Trans. Power Syst.}, 
  title={Distributionally Robust Unit Commitment With Flexible Generation Resources Considering Renewable Energy Uncertainty}, 
  year={2022},
  volume={37},
  number={6},
  pages={4179-4190}}

@ARTICLE{dro2,
  author={Rayati, Mohammad and Bozorg, Mokhtar and Cherkaoui, Rachid and Carpita, Mauro},
  journal={IEEE Trans. Smart Grid}, 
  title={Distributionally Robust Chance Constrained Optimization for Providing Flexibility in an Active Distribution Network}, 
  year={2022},
  volume={13},
  number={4},
  pages={2920-2934}}

@book{igdt_Ben_Haim,
  title={Info-gap decision theory: decisions under severe uncertainty},
  author={Ben-Haim, Yakov},
  year={2006},
  publisher={Elsevier}
}

@ARTICLE{w_igdt,
  author={Nasr, Mohamad-Amin and Nasr-Azadani, Ehsan and Nafisi, Hamed and Hosseinian, Seyed Hossein and Siano, Pierluigi},
  journal={IEEE Trans. Industr. Inform.}, 
  title={Assessing the Effectiveness of Weighted Information Gap Decision Theory Integrated With Energy Management Systems for Isolated Microgrids}, 
  year={2020},
  volume={16},
  number={8},
  pages={5286-5299},
  doi={10.1109/TII.2019.2954706}}

@ARTICLE{DDUs,
  author={Sun, Xunhang and Cao, Xiaoyu and Zeng, Bo and Zhai, Qiaozhu and Başar, Tamer and Guan, Xiaohong},
  journal={IEEE Trans. Smart Grid}, 
  title={Stochastic-Robust Planning of Networked Hydrogen-Electrical Microgrids: A Study on Induced Refueling Demand}, 
  year={2025},
  volume={16},
  number={1},
  pages={115-130}}

@article{impact_uncertainty,
  title={Representation of uncertainty in market models for operational planning and forecasting in renewable power systems: a review},
  author={Haugen, Mari and Farahmand, Hossein and Jaehnert, Stefan and Fleten, Stein-Erik},
  journal={Energy Syst.},
  pages={1--36},
  year={2023},
  publisher={Springer}
}

@article{33_bus,
  author={Baran, M.E. and Wu, F.F.},
  journal={IEEE Trans. Power Deliv.}, 
  title={Network reconfiguration in distribution systems for loss reduction and load balancing}, 
  year={1989},
  volume={4},
  number={2},
  pages={1401-1407},
  doi={10.1109/61.25627}}

@book{saa,
	title={Lectures on Stochastic Programming: Modeling and Theory},
	author={Shapiro, Alexander and Dentcheva, Darinka and Ruszczynski, Andrzej},
	edition={3},
	year={2021},
	publisher={SIAM and MOS},
	address={Philadelphia, PA, USA}
}

@ARTICLE{idm_Li_Peng,
  author={Li, Peng and Yang, Ming and Wu, Qiuwei},
  journal={IEEE Trans. Sustain. Energy}, 
  title={Confidence Interval Based Distributionally Robust Real-Time Economic Dispatch Approach Considering Wind Power Accommodation Risk}, 
  year={2021},
  volume={12},
  number={1},
  pages={58-69},
  doi={10.1109/TSTE.2020.2978634}}

\end{document}